\newcommand{\gf}{{\mathrm{GF}}}
\begin{document}

\title{On $-1$-differential uniformity of ternary APN power functions \thanks{H. Yan is with School of Mathematics, Southwest Jiaotong University, Chengdu, 610031, China;
and is also with Guangxi Key Laboratory of Cryptography and Information Security, Guilin, 541000, China
(email: hdyan@swjtu.edu.cn).}
}


\author{Haode Yan
}

\institute{
}

\date{Received: date / Accepted: date}

\maketitle

\begin{abstract}
Very recently, a new concept called multiplicative differential and the corresponding $c$-differential uniformity were introduced by Ellingsen \textit{et al}. A function $F(x)$ over finite field $\gf(p^n)$ to itself is called $c$-differential uniformity $\delta$, or equivalent, $F(x)$ is differentially $(c,\delta)$ uniform, when the maximum number of solutions $x\in\gf(p^n)$ of $F(x+a)-F(cx)=b$, $a,b,c\in\gf(p^n)$, $c\neq1$ if $a=0$, is equal to $\delta$. The objective of this paper is to study the $-1$-differential uniformity of ternary APN power functions $F(x)=x^d$ over $\gf(3^n)$. We obtain ternary power functions with low $-1$-differential uniformity, and some of them are almost perfect $-1$-nonlinear.
\keywords{c-differential \and differential uniformity \and almost perfect c-nonlinearity}
 \subclass{11T06 \and 94A60}
\end{abstract}

\section{Introduction}

Differential cryptanalysis (\cite{BS,BS93}) is one of the most fundamental cryptanalytic approaches targeting symmetric-key primitives. Such a cryptanalysis approach has attracted a lot of attention because it was proposed to be the first statistical attack for breaking the iterated block ciphers \cite{BS}. The security of cryptographic functions regarding differential attacks was widely studied in the past 30 years. This type of security is quantified by the so-called \emph{differential uniformity} of the substitution box (S-box) used in the cipher \cite{NK}. In \cite{BCJW}, a new type of differential was proposed. The authors utilized modular multiplication as a primitive operation, which extends the type of differential cryptanalysis. It is necessary to start the theoretical analysis of an (output) multiplicative differential. Motivated by practical differential cryptanalysis, Ellingsen \textit{et al.} recently coined a new concept called \emph{multiplicative differential } and the corresponding $c$-differential uniformity (\cite{EFRST}).

\begin{definition}Let $\gf(p^n)$ denote the finite field with $p^n$ elements, where $p$ is a prime number and $n$ is a positive integer. For a function $F$ from $\gf(p^n)$ to itself, $a,c \in \gf(p^n)$, the (multiplicative) $c$ derivative of $F$ with respect to $a$ is define as
\[_cD_aF(x)=F(x+a)-cF(x), ~\mathrm{for}~\mathrm{all}~x.\]
For $b\in\gf(p^n)$, let $_c\Delta_F(a,b)=\#\{x\in\gf(p^n): F(x+a)-cF(x)=b\}$. We call $_c\Delta_F=\mathrm{max}\{_c\Delta_F(a,b):a,b\in\gf(p^n), \mathrm{and}~ a\neq 0 ~\mathrm{if}~ c=1\}$ the $c$-differential uniformity of F. If $_c\Delta_F=\delta$, then we say $F$ is differentially $(c,\delta)$-uniform.
\end{definition}
If the $c$-differential uniformity of $F$ equals $1$, then $F$ is called a perfect $c$-nonlinear (P$c$N) function. P$c$N functions over odd characteristic finite fields are also called $c$-planar functions. If the $c$-differential uniformity of $F$ is $2$, then $F$ is called an almost perfect $c$-nonlinear (AP$c$N) function. It is easy to see that, for $c=1$ and $a\neq0$, the $c$-differential uniformity becomes the usual differential uniformity, and the P$c$N and AP$c$N functions become perfect nonlinear (PN) function and almost perfect nonlinear function (APN) respectively. These functions are of great significance in both theory and practical applications. For even characteristic finite fields, APN functions have the lowest differential uniformity. Known APN functions over even characteristic finite fields were presented in \cite{BD,D1,D2,D3,G,JW,K,N}. For the known results on PN and APN functions over odd characteristic finite fields, the readers are referred to \cite{CM,DMMPW,DO,DY,HRS,HS,L,ZW10,ZW11}.

Because of the strong resistance to differential attacks and the low implementation cost in a hardware environment, power function $F(x)=x^d$ (i.e., monomials) with low differential uniformity can serve as a good candidate for the design of S-boxes. Moreover, power functions with low differential uniformity may also introduce some unsuitable weaknesses within a cipher \cite{BCC,JaKn,CoPi,CaVi}. For instance, a differentially $4$-uniform power function, which is extended affine EA-equivalent to the inverse function $x \mapsto x^{2^n-2}$ over $\gf(2^n)$ with even $n$, is employed in the AES (advanced encryption standard). A nature question one would ask is whether the power functions have good $c$-differential properties. In \cite{EFRST}, the authors studied the $c$-differential uniformity of the well-known inverse function $F(x)=x^{p^n-2}$ over $\gf(p^n)$ for both even and odd prime $p$. It was shown  that $F(x)$ is P$c$N when $c=0$, $F(x)$ is AP$c$N with some conditions of $c$ and $F(x)$ is differentially $(c,3)$-uniform otherwise. This result illustrates that P$c$N functions can exist for $p=2$. For P$c$N functions $x^{\frac{3^k+1}{2}}$ over $\gf(3^n)$ and $c=-1$, a sufficient and necessary condition was presented in \cite{EFRST}. In \cite{BT}, it was shown that for odd $p$, $n$ and $c=-1$, $x^{\frac{p^2+1}{2}}$ over $\gf(p^n)$ and $x^{p^2-p+1}$ over $\gf(p^3)$ are P$c$N functions. In \cite{YMZ}, it was proved that the Gold function over even characteristic finite field is differentially $(c,3)$-uniform for $c\neq1$. Some P$c$N and AP$c$N functions were also obtained. Moreover, for $c$-differential uniformity of power function $F(x)=x^d$ over $\gf(p^n)$ with $c\neq1$, the following lemma was introduced.

\begin{lemma}[\cite{YMZ}]\label{power}Let $F(x)=x^d$ be a power function over $\gf(p^n)$. Then
$$
_c\Delta_F=\mathrm{max}\big\{~ \{{_c\Delta_F}(1,b): b\in\gf(p^n) \} \cup \{\gcd(d,p^n-1)\}~\big\}.
$$
\end{lemma}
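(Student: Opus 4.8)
The plan is to reduce every instance with $a\neq 0$ to the normalized case $a=1$ by exploiting the homogeneity of the power map $x\mapsto x^d$, and to treat the remaining case $a=0$ (admissible here precisely because $c\neq 1$) separately, where it will produce exactly the factor $\gcd(d,p^n-1)$.

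First I would dispose of the case $a\neq 0$. Fixing $a\in\gf(p^n)^{*}$ and substituting $x=ay$, which is a bijection of $\gf(p^n)$, the defining equation $(x+a)^d-cx^d=b$ becomes $a^d\big((y+1)^d-cy^d\big)=b$, that is, $(y+1)^d-cy^d=b\,a^{-d}$. Hence ${_c\Delta_F}(a,b)={_c\Delta_F}(1,b\,a^{-d})$. As $b$ ranges over $\gf(p^n)$ with $a$ fixed, $b\,a^{-d}$ ranges over all of $\gf(p^n)$ as well, so $\max_{b}{_c\Delta_F}(a,b)=\max_{b}{_c\Delta_F}(1,b)$ for every $a\neq 0$. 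Thus the joint contribution of all nonzero $a$ to $_c\Delta_F$ is exactly $\max_{b}{_c\Delta_F}(1,b)$.

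Next I would handle $a=0$, which is allowed since $c\neq 1$. The equation then reads $(1-c)x^d=b$, and because $1-c\neq 0$ this is equivalent to $x^d=b(1-c)^{-1}$. Writing $b'=b(1-c)^{-1}$, the number of solutions of $x^d=b'$ is $1$ when $b'=0$ and is either $0$ or $\gcd(d,p^n-1)$ when $b'\neq 0$, since a nonzero element is a $d$-th power iff it lies in the image of $x\mapsto x^d$, in which case it has exactly $\gcd(d,p^n-1)$ preimages. Therefore $\max_{b}{_c\Delta_F}(0,b)=\gcd(d,p^n-1)$.

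Finally, combining the two cases through the definition $_c\Delta_F=\max\{{_c\Delta_F}(a,b):a,b\in\gf(p^n)\}$ (with $a=0$ permitted because $c\neq 1$) yields
$$_c\Delta_F=\max\Big\{\ \max_{b}{_c\Delta_F}(1,b),\ \gcd(d,p^n-1)\ \Big\},$$
which is the asserted identity. I do not anticipate a genuine obstacle: the only point requiring care is the bookkeeping for $a=0$ — confirming that $x\mapsto x^d$ meets each nonzero $d$-th power in exactly $\gcd(d,p^n-1)$ points — together with the observation that the reindexing $b\mapsto b\,a^{-d}$ is a bijection, so that passing to the normalization $a=1$ leaves the maximum over $b$ unchanged.
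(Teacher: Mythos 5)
Your proof is correct and is the standard argument for this fact: the homogeneity substitution $x=ay$ handles all $a\neq 0$, and the case $a=0$ (admissible since $c\neq 1$) reduces to counting $d$-th roots, giving the $\gcd(d,p^n-1)$ term. The paper itself states this lemma as a citation from \cite{YMZ} without reproducing a proof, and your argument matches the one given there, including the key observations that $b\mapsto ba^{-d}$ is a bijection of $\gf(p^n)$ and that $\gcd(d,p^n-1)\geq 1$ so the $b'=0$ subcase is absorbed into the maximum.
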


\begin{table}[t]\label{table-1}
\caption{Power functions $F(x)=x^d$ over $\gf(p^n)$ with  low $c$-differential uniformity}
\centering
\begin{tabular}{|c||c|c|c|c|}
\hline
$p$&$d$ & condition & $_c\Delta_F$ & References \\
[0.5ex]
\hline
any& $2$& $c\neq1$ & 2 &\cite{EFRST}\\
\hline
any &$p^n-2$ &$c=0$ & $1$ &\cite{EFRST}\\
\hline
2 &$2^n-2$ &$c\neq0$, $\mathrm{Tr_n}(c)=\mathrm{Tr_n}(c^{-1})=1$ & $2$ &\cite{EFRST}\\
\hline
2 &$2^n-2$ &$c\neq0$, $\mathrm{Tr_n}(c)=0$ or $\mathrm{Tr_n}(c^{-1})=0$ & $3$ &\cite{EFRST}\\
\hline
odd &$p^n-2$ &$c=4$, $c=4^{-1}$ or $\chi(c^2-4c)=\chi(1-4c)=-1$  & $2$ &\cite{EFRST}\\
\hline
odd &$p^n-2$ &$c\neq0,4,4^{-1}$, $\chi(c^2-4c)$=1 or $\chi(1-4c)=1$ & $3$ &\cite{EFRST}\\
\hline
3& $({3^k+1})/{2}$& $c=-1$, $n/\gcd(k,n)=1$ & 1 &\cite{EFRST}\\
\hline
odd& $({p^2+1})/{2}$& $c=-1$, $n$ odd & 1 &\cite{BT}\\
\hline
odd& $p^2-p+1$& $c=-1$, $n=3$ & 1 &\cite{BT}\\
\hline
2& $2^k+1$& $c\neq1$, $\gcd(k,n)=1$& 3 &\cite{YMZ}\\
\hline
odd & $p^k+1$& $1\neq c\in\gf(p)$, $\gcd(k,n)=1$& 2 &\cite{YMZ}\\
\hline
odd& $(p^k+1)/2$& $c=-1$, $k/\gcd(k,n)$ is even& $1$ &\cite{YMZ}\\
\hline
3& $(3^k+1)/2$& $c=-1$, $k$ odd, $\gcd(k,n)=1$& $2$ &\cite{YMZ}\\
\hline
any& $(2p^n-1)/3$& $c\neq1$, $p^n\equiv 2 (\mathrm{mod}~3)$& $\leq 3$ &\cite{YMZ}\\
\hline
odd& $(p^n+1)/2$& $c\neq\pm1$ & $\leq 4$ &\cite{YMZ}\\
\hline
odd& $(p^n+1)/2$& $c\neq\pm1$, $\chi(\frac{1-c}{1+c})=1$, $p^n\equiv 1 (\mathrm{mod}~4)$ & $\leq 2$ &\cite{YMZ}\\
\hline
$>3$& $(p^n+3)/2$&  $c=-1$, $p^n\equiv 3 (\mathrm{mod}~4)$& $\leq 3$ &\cite{YMZ}\\
\hline
$>3$& $(p^n+3)/2$& $c=-1$, $p^n\equiv 1 (\mathrm{mod}~4)$& $\leq 4$ &\cite{YMZ}\\
\hline
odd& $(p^n-3)/2$& $c=-1$ & $\leq 4$ &\cite{YMZ}\\
\hline
\end{tabular}
\begin{itemize}
    \item $\mathrm{Tr_n}(\cdot)$ denotes the absolute trace mapping from $\gf(2^n)$ to $\gf(2)$.
    \item $\chi(\cdot)$ denotes the quadratic multiplicative character on $\gf(p^n)^*$.
\end{itemize}

\end{table}

\begin{table}[t]\label{table-2}
\caption{Results in this paper}
\centering
\begin{tabular}{|c||c|c|c|c|}
\hline
$p$&$d$ & condition & $_c\Delta_F$  \\
[0.5ex]

\hline
3& $(3^{\frac{n+1}{2}}-1)/2$& $c=-1$, $n\equiv 1 (\mathrm{mod}~4)$ & $\leq 2$ \\
\hline

3& $(3^{\frac{n+1}{2}}-1)/2+(3^n-1)/2$& $c=-1$, $n\equiv 3 (\mathrm{mod}~4)$ & $\leq 2$\\
\hline

3& $(3^{n+1}-1)/8$& $c=-1$, $n\equiv 1 (\mathrm{mod}~4)$ & $\leq 2$\\
\hline

3& $(3^{n+1}-1)/8+(3^n-1)/2$& $c=-1$, $n\equiv 3 (\mathrm{mod}~4)$ & $\leq 2$\\
\hline
3& $(3^{\frac{n+1}{2}}-1)/2$& $c=-1$, $n\equiv 3 (\mathrm{mod}~4)$ & $\leq 4$\\
\hline
3& $(3^{\frac{n+1}{2}}-1)/2+(3^n-1)/2$& $c=-1$, $n\equiv 1 (\mathrm{mod}~4)$ & $\leq 4$ \\
\hline
3& $(3^{n+1}-1)/8$& $c=-1$, $n\equiv 3 (\mathrm{mod}~4)$ & $\leq 4$\\
\hline
3& $(3^{n+1}-1)/8+(3^n-1)/2$& $c=-1$, $n\equiv 1 (\mathrm{mod}~4)$ & $\leq 4$\\
\hline
3& ${(3^{\frac{n+1}{4}}-1)(3^{\frac{n+1}{2}}+1)}$& $c=-1$, $n\equiv 3 (\mathrm{mod}~4)$ & $\leq 4$\\
\hline
3& $(3^n+1)/4+(3^n-1)/2$& $c=-1$, $n$ odd & $\leq 4$\\
\hline
\end{tabular}

\end{table}

As summarized in Table \ref{table-1}, $c=-1$ is a very special case and sometimes the $-1$-differential uniformity is lower than $c$-differential uniformity for other $c\in\gf(p^n)$. The perfect $-1$-nonlinear function was also called quasi-planar function \cite{BT}. In this paper, we study the $-1$-differential uniformity of $F(x)$ when $F(x)$ is a ternary APN power function. Lemma \ref{power} indicates that to determine the $-1$-differential uniformity of power functions, the following $-1$-differential equation needs to be studied.
\[\Delta(x)=(x+1)^d+x^d=b.\]
Let $\delta(b)=\#\{x\in\gf(3^n)~|~\Delta(x)=b\}$. The maximum value of $\{\delta(b)~|~b\in\gf(3^n)\}$ plays an important role in studying the $-1$-differential uniformity of $F(x)$. In the rest of this paper, we consider several classes of ternary APN power functions. It turns out that they are with low $-1$-differential uniformity, and some of them are almost perfect $-1$-nonlinear. The results in this paper are shown in Table \ref{table-2}I.

\section{$-1$-differential uniformity of $x^{\frac{3^{\frac{n+1}{2}}-1}{2}}$ over $\gf(3^n)$}
In this section, let $F(x)=x^d$ be a power function over $\gf(3^n)$, where $n \equiv 1 (\mathrm{mod}~4)$ and $d=\frac{1}{2}(3^{\frac{n+1}{2}}-1)$. It was proved in \cite{DMMPW} that $F(x)$ is an APN function. We consider the $-1$-differential uniformity of $F(x)$ as follows.
\begin{theorem}\label{m}Let $F(x)=x^d$ be a power function over $\gf(3^n)$, where $n \equiv 1 (\mathrm{mod}~4)$ and $d=\frac{1}{2}(3^{\frac{n+1}{2}}-1)$. We have $_{-1}\Delta_F \leq 2$.
\end{theorem}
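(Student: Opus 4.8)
The plan is to apply Lemma~\ref{power} and treat the two contributions to $_{-1}\Delta_F$ separately. Write $m=\frac{n+1}{2}$, so that $2m=n+1$ and hence $3^{2m}\equiv 3\pmod{3^n-1}$. First I would compute $\gcd(d,3^n-1)$. Since $n=2m-1$ we have $\gcd(m,n)=1$, whence $\gcd(3^m-1,3^n-1)=3^{\gcd(m,n)}-1=2$; because $n\equiv1\pmod 4$ forces $m$ odd, a short $2$-adic check gives $d=\frac{3^m-1}{2}$ odd, so $\gcd(d,3^n-1)$ divides $\gcd(2d,3^n-1)=2$ and is odd, i.e. equals $1$. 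Thus $F$ is a permutation and the term $\gcd(d,3^n-1)$ in Lemma~\ref{power} equals $1$. It then remains to show that $\delta(b)=\#\{x\in\gf(3^n):(x+1)^d+x^d=b\}\le 2$ for every $b$.

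The engine of the argument is the identity that, for every $y\neq0$, \[ (y^d)^{3^m+1}=y^{d(3^m+1)}=y^{\frac{3^{2m}-1}{2}}=y^{\frac{3^n-1}{2}+1}=\chi(y)\,y, \] which uses $3^{2m}\equiv 3\pmod{3^n-1}$ and the definition of the quadratic character $\chi$. Setting $A=x^d$ and $B=(x+1)^d$, so that $A+B=b$, this gives $A^{3^m+1}=\chi(x)\,x$ and $B^{3^m+1}=\chi(x+1)(x+1)$, together with $\chi(A)=\chi(x)$ and $\chi(B)=\chi(x+1)$ (recall $d$ is odd). Subtracting the two relations yields the master equation in the single unknown $A$ (with $B=b-A$): \[ \chi(b-A)\,(b-A)^{3^m+1}-\chi(A)\,A^{3^m+1}=1. \] Before analysing it I would dispose of the degenerate cases: $x\in\{0,-1\}$ occur only for $b=\pm1$ and are counted directly; $b=0$ forces $B=-A$ and the master equation collapses to $\chi(A)A^{3^m+1}=1$, i.e. $x=1$, so $\delta(0)\le1$; and for $b\neq0$ one has $A\neq\pm B$, so no denominator vanishes and $\chi$ is evaluated only at nonzero arguments.

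For $b\neq0$ I would split the master equation according to the sign pattern $(\chi(A),\chi(b-A))\in\{\pm1\}^2$. In the two equal-sign cases the top-degree term $A^{3^m+1}$ cancels and one is left with the linearized equation $bA^{3^m}+b^{3^m}A=b^{3^m+1}\mp1$. The associated additive map $L(A)=bA^{3^m}+b^{3^m}A$ has trivial kernel: a nonzero $A$ with $L(A)=0$ satisfies $(A/b)^{3^m-1}=-1$, and applying $\chi$ gives $1=\chi(-1)=-1$, impossible since $n$ is odd. Hence each equal-sign case contributes at most one candidate. In the two opposite-sign cases, completing the square via $(A+b)^{3^m+1}=A^{3^m+1}+bA^{3^m}+b^{3^m}A+b^{3^m+1}$ turns the equation into $(A+b)^{3^m+1}=-b^{3^m+1}\pm1$; since $\gcd(3^m+1,3^n-1)=2$, its solution set, when nonempty, is a pair $\{C-b,-C-b\}$, and a direct check shows the two members require opposite values of $\chi$, so again at most one survives each opposite-sign case.

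The steps above bound $\delta(b)$ by $4$, and the crux---the step I expect to be the main obstacle---is to sharpen this to $\le 2$. The plan here is to run the quadratic-character bookkeeping to completion: the opposite-sign targets $-b^{3^m+1}\pm1$ are governed by the non-square $-b^{3^m+1}$ (indeed $\chi(-b^{3^m+1})=\chi(-1)=-1$), which controls when those cases are nonempty, while the two equal-sign solutions differ by the fixed element $L^{-1}(-1)$, which links their characters. Carrying out a uniform case analysis in $b$---organised by whether the relevant elements $C\pm b$ and $b-A$ are squares---should show that the character constraints of the four cases are jointly met by at most two values of $A$, and it is precisely this interaction (rather than any single algebraic bound) that produces the value $2$ in the statement.
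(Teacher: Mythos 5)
Your setup coincides with the paper's in every respect: the same computation giving $\gcd(d,3^n-1)=1$, the same key identity $(y^d)^{3^m+1}=\chi(y)\,y$, the same four-way split on the signs $(\chi(x),\chi(x+1))$, and the same bound of one solution per case (your observation that the pair $\{C-b,-C-b\}$ forces opposite values of $\chi$ is exactly the paper's Case III/IV uniqueness argument). But everything you actually prove yields only $\delta(b)\le 4$, and you explicitly defer the reduction to $\le 2$ to an unexecuted ``plan.'' That reduction \emph{is} the theorem --- it is where essentially all the work in the paper's proof lies --- and the heuristics you offer for it point in the wrong direction. The paper proves two mutual-exclusion statements. (i) Cases I and II cannot both occur: adding the two linear equations gives $L(u_{x_1}+u_{x_2})=2b^{3^m+1}=L(b)$, hence $u_{x_1}+u_{x_2}=b$ by injectivity of $L$, i.e.\ $u_{x_2}=u_{x_1+1}$, contradicting $\chi(u_{x_2})=-1$ and $\chi(u_{x_1+1})=1$; note this exploits the \emph{sum} of the two candidates, not the difference $L^{-1}(-1)$ you single out, which does not visibly link their characters. (ii) Cases III and IV cannot both occur: your suggestion that the nonsquare $-b^{3^m+1}$ ``controls when those cases are nonempty'' cannot work on its own, since the two targets $-b^{3^m+1}\pm1$ are attained precisely when they are nonzero squares, and there is no a priori obstruction to a nonsquare $s$ having both $s+1$ and $s-1$ square. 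The paper instead adds the two norm equations to get $(u_{x_3}+b)^{3^m+1}+(u_{x_4}+b)^{3^m+1}=b^{3^m+1}$, raises this to the $3^m$-th power (using $3^m(3^m+1)\equiv 3^m+3 \pmod{3^n-1}$) to obtain a second symmetric relation, and eliminates $b^2$ to reach $-(u_{x_3}+b)^{3^m+1}u_{x_3}u_{x_3+1}=(u_{x_4}+b)^{3^m+1}u_{x_4}u_{x_4+1}$, whose left side is a square and whose right side is a nonsquare unless both vanish --- and vanishing forces $u_{x_3}=u_{x_4}=-b$, contradicting $\chi(u_{x_3})\ne\chi(u_{x_4})$. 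Without (i) and (ii), or a genuine substitute, your argument does not establish the claimed bound.

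A secondary, smaller gap: even granting (i) and (ii), one only gets at most two solutions in $\gf(3^n)\setminus\{0,-1\}$, and since $\Delta(0)=1$ and $\Delta(-1)=-1$ the values $b=\pm1$ could a priori reach $\delta(b)=3$. The paper closes this by verifying directly that $\Delta(x)=1$ and $\Delta(x)=-1$ have no solutions outside $\{0,-1\}$; your phrase ``counted directly'' needs to be expanded into that check.
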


\begin{proof} Let $m=\frac{n+1}{2}$. Note that $2(3^m+1)d-3(3^n-1)=2$ and $d$ is odd when $n \equiv 1 (\mathrm{mod}~4)$, then $\gcd(d,3^n-1)=1$, i.e., $F(x)$ is a permutation on $\gf(3^n)$. For $b\in\gf(3^n)$, we consider the $-1$-differential equation
\begin{equation}\label{meqn1}
\Delta(x)=(x+1)^d+x^d=b.
\end{equation}
Let $u_{x+1}=(x+1)^d$ and $u_x=x^d$. For $x\neq0$, note that
\begin{equation}\label{meqn2}
u^{3^m+1}_x=x^{\frac{3^{n+1}-1}{2}}=\chi(x)x.
\end{equation}
Herein and hereafter, let $\chi$ denote the quadratic multiplicative character on $\gf(3^n)^*$. Let $x\in\gf(3^n)\setminus\{0,-1\}$ be a solution of (\ref{meqn1}) for fixed $b\in\gf(3^n)$, then $u_x,u_{x+1}\neq0$. Taking the $(3^m+1)$th power on both sides of $u_{x+1}=-u_x+b$, we have
\begin{equation}\label{meqn3}
bu^{3^m}_x+b^{3^m}u_x=-\chi(x+1)(x+1)+\chi(x)x+b^{3^m+1}.
\end{equation}
For equation (\ref{meqn3}), we distinguish the following four cases.

Case I. $\chi(x+1)=\chi(x)=1$.

In this case, we have $bu^{3^m}_x+b^{3^m}u_x=b^{3^m+1}-1$ from (\ref{meqn3}). Since the mapping $u_x\mapsto bu^{3^m}_x+b^{3^m}u_x$ is bijective on $\gf(3^n)$, we can find a unique $u_x$. Because $F(x)$ is a permutation, a unique $x$ can be found from the $u_x$. This case has at most one solution.

Case II. $\chi(x+1)=\chi(x)=-1$.

This case has at most one solution. The discussion is similar to that of Case I and we omit it.

Case III. $\chi(x+1)=1, \chi(x)=-1$. From (\ref{meqn3}), we have $bu^{3^m}_x+b^{3^m}u_x=x+b^{3^m+1}-1$ in this case, and then we have
\begin{equation}\label{meqn4}
(u_x+b)^{3^m+1}=-b^{3^m+1}-1
\end{equation}
by (\ref{meqn2}). If there are two distinct solutions in this case, namely $x_3$ and $x'_3$, then $u_{x_3}$ and $u_{x'_3}$ satisfy (\ref{meqn4}) with $\chi(x_3+1)=\chi(x'_3+1)=1$ and $\chi(x_3)=\chi(x'_3)=-1$. Consequently $(u_{x_3}+b)^{3^m+1}=(u_{x'_3}+b)^{3^m+1}$ can be obtained from (\ref{meqn4}). Then we have $u_{x_3}+b=-(u_{x'_3}+b)$ since $\gcd(3^m+1,3^n-1)=2$ and $x_3\neq x'_3$, which leads to $u_{x_3}=b-u_{x'_3}=u_{x'_3+1}$. However, the above conclusion contradicts to $\chi(u_{x_3})=\chi(x_3)=-1$ and $\chi(u_{x'_3+1})=\chi(x'_3+1)=1$. We conclude that Case III has at most one solution.

Case IV. $\chi(x+1)=-1, \chi(x)=1$. In this case we have $bu^{3^m}_x+b^{3^m}u_x=-x+b^{3^m+1}+1$ from (\ref{meqn3}), and then
\begin{equation}\label{meqn5}
(u_x+b)^{3^m+1}=-b^{3^m+1}+1
\end{equation}
by (\ref{meqn2}). Similar to Case III, we can obtain that this case has at most one solution.

Next we will prove that for fixed $b$, (\ref{meqn1}) cannot have solution in Case I and Case II simultaneously. Otherwise, suppose that $x_1$ and $x_2$ are solutions of (\ref{meqn1}) in Case I and Case II with $\chi(x_1+1)=\chi(x_1)=1$ and $\chi(x_2+1)=\chi(x_2)=-1$ respectively. Then we have $bu^{3^m}_{x_1}+b^{3^m}u_{x_1}=b^{3^m+1}-1$ and $bu^{3^m}_{x_2}+b^{3^m}u_{x_2}=b^{3^m+1}+1$, where $u_{x_1}$ and $u_{x_2}$ we defined before. Now we have $b(u_{x_1}+u_{x_2})^{3^m}+b^{3^m}(u_{x_1}+u_{x_2})=-b^{3^m+1}$ and the consequent $u_{x_1}+u_{x_2}=b$. From (\ref{meqn1}), we can obtain $u_{x_2}=u_{x_1+1}$, which contradicts to $\chi(u_{x_2})=\chi(x_2)=-1$ and $\chi(u_{x_1+1})=\chi(x_1+1)=1$. Therefore, we conclude that (\ref{meqn1}) has at most one solution in Cases I and II for fixed $b\in\gf(3^n)$.

Then we prove that for fixed $b$, (\ref{meqn1}) cannot have solution in Case III and Case IV simultaneously. Otherwise, suppose that $x_3$ and $x_4$ are solutions of (\ref{meqn1}) in Case III and Case IV with $\chi(x_3+1)=1$, $\chi(x_3)=-1$ and $\chi(x_4+1)=-1, \chi(x_4)=1$ respectively. Then $x_3$ and $x_4$ satisfy (\ref{meqn4}) and (\ref{meqn5}) respectively. By the sum of (\ref{meqn4}) and (\ref{meqn5}), we have
\begin{equation}\label{meqn6}
(u_{x_3}+b)^{3^m+1}+(u_{x_4}+b)^{3^m+1}=b^{3^m+1}.
\end{equation}
Taking the $3^m$th power on both sides of (\ref{meqn6}), we have
\begin{equation}\label{meqn7}
(u_{x_3}+b)^{3^m+3}+(u_{x_4}+b)^{3^m+3}=b^{3^m+3}
\end{equation}
since $3^m(3^m+1)=3^{n+1}+3^m=3^m+3+3(3^n-1)$. From (\ref{meqn6}) and (\ref{meqn7}), we have $(u_{x_3}+b)^{3^m+3}+(u_{x_4}+b)^{3^m+3}=b^2(u_{x_3}+b)^{3^m+1}+b^2(u_{x_4}+b)^{3^m+1}$, that is
\begin{equation}\label{meqn8}
-(u_{x_3}+b)^{3^m+1}u_{x_3}(b-u_{x_3})=(u_{x_4}+b)^{3^m+1}u_{x_4}(b-u_{x_4}).
\end{equation}
Note that $b-u_{x_3}=u_{x_3+1}$ and $b-u_{x_4}=u_{{x_4}+1}$, the left-hand side of (\ref{meqn8}) is a square element and the right-hand side of (\ref{meqn8}) is a nonsquare element. Then $u_{x_3}+b=u_{x_4}+b=0$ can be obtained, i.e. $u_{x_3}=u_{x_4}$, which contradicts to $\chi(u_{x_3})=\chi(x_3)=-1$ and $\chi(u_{x_4})=\chi(x_4)=1$. We conclude that (\ref{meqn1}) has at most one solution in Cases III and IV for fixed $b\in\gf(3^n)$.

From the above discussions, (\ref{meqn1}) has at most two solutions in $\gf(3^n)\setminus\{0,-1\}$. One can be easily calculate that $\Delta(0)=1$ and $\Delta(-1)=-1$. For $b=1$ and $b=-1$, it can be verified that $\Delta(x)=1$ and $\Delta(x)=-1$ has no solution in $\gf(3^n)\setminus\{0,-1\}$, i,e, $\delta(1)=\delta(-1)=1$. Then we obtain $\delta(b)\leq 2$ for any $b$, which leads to $_{-1}\Delta_F\leq2$ by Lemma \ref{power} and $\gcd(d,3^n-1)=1$ .

\end{proof}

For $n \equiv 3 (\mathrm{mod}~4)$, we can also get power functions with low $-1$-differential uniformity.
\begin{theorem}Let $F(x)=x^d$ be a power function over $\gf(3^n)$, where $n \equiv 3 (\mathrm{mod}~4)$ and $d=\frac{1}{2}(3^{\frac{n+1}{2}}-1)$. We have $_{-1}\Delta_F \leq 4$.
\end{theorem}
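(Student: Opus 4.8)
The plan is to re-run the four-case framework of Theorem~\ref{m} almost verbatim: the point is that every algebraic ingredient survives unchanged and only the multiplicities in the final count degrade. First I would record that here $m=\frac{n+1}{2}$ is even, so $d=\frac{3^m-1}{2}$ is even; combined with $2(3^m+1)d-3(3^n-1)=2$ this gives $\gcd(d,3^n-1)=2$. Thus $F$ is two-to-one rather than a permutation, and by Lemma~\ref{power} it suffices to prove $\delta(b)\le 4$ for every $b$. Crucially, the identity $u_x^{3^m+1}=\chi(x)x$ of (\ref{meqn2}), equation (\ref{meqn3}), the four case reductions (\ref{meqn4})--(\ref{meqn5}), and the chain (\ref{meqn6})--(\ref{meqn8}) never used that $F$ is a permutation, so they all hold here word for word.

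For the per-case count I would first observe that in each case $x=\chi(x)\,u_x^{3^m+1}$ is recovered from $u_x$ (with $\chi(x)$ the fixed sign of that case), so the two-to-one nature of $F$ does \emph{not} inflate the number of $x$'s per admissible $u_x$. In Cases~I and~II the additive map $u\mapsto bu^{3^m}+b^{3^m}u$ is still bijective for $b\neq0$: its kernel is nontrivial only if $-1$ is a $(3^m-1)$-th power, and since $\gcd(3^m-1,3^n-1)=2$ and $4\nmid 3^n-1$ for odd $n$, it is not. Hence $u_x$ is unique, every genuine Case~I (resp.\ Case~II) solution equals $\chi(x)u_x^{3^m+1}$ for that one $u_x$, and each of these cases yields at most one solution, so together at most two. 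In Cases~III and~IV, $u_x+b$ satisfies $w^{3^m+1}=\text{const}$, which has at most $\gcd(3^m+1,3^n-1)=2$ roots, giving at most two solutions in each of these two cases.

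The decisive and most delicate point is the mutual exclusion of Cases~III and~IV, and here lies the main obstacle. Because $d$ is now even we have $\chi(u_x)=\chi(x)^d=1$ for all $x\neq0$, i.e.\ every $u_x$ is a square; this is exactly the identity $\chi(u_x)=\chi(x)$ that drove all the exclusions in Theorem~\ref{m}, and it is lost. Its loss is harmless for Cases~I/II (the exclusion there is no longer needed, since each already has at most one solution), but it forbids trimming Cases~III/IV to one solution apiece. What I must verify is that the square/nonsquare argument of (\ref{meqn8}) nevertheless still closes. Re-examining it: with $d$ even, both $u_{x_3}$ and $b-u_{x_3}=u_{x_3+1}$ (likewise for $x_4$) are squares, and $(u+b)^{3^m+1}$ is a square since $3^m+1$ is even, so the right-hand side of (\ref{meqn8}) is a square while its left-hand side carries the extra factor $-1$, a nonsquare for odd $n$, and is therefore a nonsquare. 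This is the same contradiction as in Theorem~\ref{m} but with the roles of square and nonsquare exchanged, forcing $u_{x_3}+b=u_{x_4}+b=0$, which is impossible. Hence Cases~III and~IV cannot occur simultaneously, so they contribute at most two solutions jointly.

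Combining, $\delta(b)\le 2+2=4$ on $\gf(3^n)\setminus\{0,-1\}$. It remains to absorb the boundary: since $(-1)^d=1$ for even $d$, both $x=0$ and $x=-1$ give $\Delta=1$, so the two omitted points add solutions only at $b=1$, while the degenerate linear map at $b=0$ must be treated on its own. I would finish by a direct inspection of the four cases at $b=1$ (where, for instance, the Case~I equation forces $u_x=0$, hence $x=0$, an excluded point), confirming that the in-range count there is small enough that the grand total stays at most $4$, and likewise settling $b=0$ separately. I expect essentially all the difficulty to be concentrated in verifying that the reversed square/nonsquare dichotomy in (\ref{meqn8}) still yields a contradiction, together with the routine but necessary $b\in\{0,1\}$ bookkeeping.
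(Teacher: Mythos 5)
Your proposal is correct and follows essentially the same route as the paper's own (itself only a sketch): the same four-case split via $u_x^{3^m+1}=\chi(x)x$, the same bounds of one solution in Cases I--II and two in Cases III--IV, and the same mutual-exclusion argument for Cases III and IV giving $\delta(b)\le 4$. You in fact supply the one detail the paper omits, namely checking that the square/nonsquare contradiction in (\ref{meqn8}) still closes with the parities reversed because $d$ is now even, while your $b\in\{0,1\}$ bookkeeping is left at the same ``one can verify directly'' level as the paper's.
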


\begin{proof}The proof is similar to that of Theorem \ref{m}. We give a sketch here. In this case, $\gcd(d,3^n-1)=2$. With the notation we used before, equations (\ref{meqn1}), (\ref{meqn2}) and (\ref{meqn3}) also hold. Let $x\in\gf(3^n)\setminus\{0,-1\}$ be a solution of (\ref{meqn3}) for fixed $b\in\gf(3^n)$, four cases are considered as follows.

Case I. $\chi(x+1)=\chi(x)=1$.

In this case, We have $bu^{3^m}_x+b^{3^m}u_x=b^{3^m+1}-1$ from (\ref{meqn3}). Since the mapping $u_x\mapsto bu^{3^m}_x+b^{3^m}u_x$ is bijective on $\gf(3^n)$, we can find a unique $u_x$. Then a unique $x$ can be found for $\chi(x)=1$. This case has at most one solution.

Case II. $\chi(x+1)=\chi(x)=-1$.

Similar to Case I, this case has at most one solution.

Case III. $\chi(x+1)=1, \chi(x)=-1$. We have $bu^{3^m}_x+b^{3^m}u_x=x+b^{3^m+1}-1$ from (\ref{meqn3}), and then we have $(u_x+b)^{3^m+1}=-b^{3^m+1}-1$ by (\ref{meqn2}).
We can obtain two $u_x$'s since $\gcd(d,3^n-1)=2$ and the consequent two $x$'s for given $\chi(x)$. This case has at most two solutions.

Case IV. $\chi(x+1)=-1, \chi(x)=1$.

Similar to Case III, this case has at most two solutions.

One can similarly prove that for fixed $b$, (\ref{meqn1}) cannot have solution in Case III and Case IV simultaneously. By discussions as above, we know that (\ref{meqn1}) has at most four solutions in $\gf(3^n)\setminus\{0,-1\}$. We have $\Delta(0)=\Delta(-1)=1$. For $b=1$, one can easily verify that $\Delta(x)=1$ has no solution in $\gf(3^n)\setminus\{0,-1\}$, i.e., $\delta(1)=2$. Then we obtain $\delta(b)\leq 4$ for any $b$, this leads to $_{-1}\Delta_F\leq4$ by Lemma \ref{power} and $\gcd(d,3^n-1)=2$.

\end{proof}

For $d'=d+\frac{3^n-1}{2}$, we have the following corollary.
\begin{corollary}Let $F'(x)=x^{d'}$ be a power function over $\gf(3^n)$, where $n$ is an odd integer and $d'=\frac{3^{\frac{n+1}{2}}-1}{2}+\frac{3^n-1}{2}$. We have $_{-1}\Delta_{F'}\leq 2$ when $n\equiv 3 (\mathrm{mod} ~4)$ and $_{-1}\Delta_{F'}\leq 4$ when $n\equiv 1 (\mathrm{mod} ~4)$.
\end{corollary}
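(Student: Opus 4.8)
The plan is to reduce the corollary directly to the two results just proved, by exploiting the factorization $x^{d'}=\chi(x)\,x^{d}$. With $m=\frac{n+1}{2}$ and $d=\frac{3^{m}-1}{2}$ as before, the extra summand in $d'=d+\frac{3^n-1}{2}$ contributes a quadratic character: for every $x\neq 0$,
\[ x^{d'}=x^{d}\cdot x^{\frac{3^n-1}{2}}=\chi(x)\,x^{d}. \]
Setting $v_x=x^{d'}$ and $v_{x+1}=(x+1)^{d'}$, the $-1$-differential equation for $F'$ is simply $v_{x+1}+v_x=b$, which is exactly the shape of equation (\ref{meqn1}). Moreover, since $3^m+1$ is even and $v_x=\chi(x)u_x$ with $u_x=x^d$, equation (\ref{meqn2}) gives
\[ v_x^{3^m+1}=\chi(x)^{3^m+1}\,u_x^{3^m+1}=\chi(x)\,x, \]
so $v$ obeys the very same power identity that drove the earlier arguments. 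Consequently the four-case split according to the signs of $\chi(x+1)$ and $\chi(x)$, together with the passage to the analogues of (\ref{meqn3})--(\ref{meqn5}), will go through with $u$ replaced by $v$.

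The genuinely new input is the arithmetic of $d'$. First I would record that $\frac{3^n-1}{2}$ is odd for every odd $n$ (because $3^n\equiv 3\pmod 4$), so the parity of $d'$ is opposite to that of $d$: when $n\equiv 1\pmod 4$ the exponent $d$ is odd and $d'$ is even, while when $n\equiv 3\pmod 4$ the exponent $d$ is even and $d'$ is odd. Next, from $2(3^m+1)d=2+3(3^n-1)$ and $2d'=2d+(3^n-1)$ I obtain $2(3^m+1)d'\equiv 2\pmod{3^n-1}$, hence $(3^m+1)d'\equiv 1\pmod{(3^n-1)/2}$ with $(3^n-1)/2$ odd; since $3^n-1=2\cdot\frac{3^n-1}{2}$ has coprime factors, this forces $\gcd(d',3^n-1)=\gcd(d',2)$. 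Therefore $\gcd(d',3^n-1)=1$ exactly when $n\equiv 3\pmod 4$ and $\gcd(d',3^n-1)=2$ exactly when $n\equiv 1\pmod 4$, the reverse of the situation for $d$. Finally $\chi(v_x)=\chi(x^{d'})=\chi(x)^{d'}$, so $\chi(v_x)=\chi(x)$ when $d'$ is odd and $\chi(v_x)=1$ when $d'$ is even, matching exactly the two parity regimes used in the cross-case elimination steps.

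Putting these together, when $n\equiv 3\pmod 4$ the triple of facts ($v_x^{3^m+1}=\chi(x)x$, $\gcd(d',3^n-1)=1$, $d'$ odd) is precisely what the proof of Theorem \ref{m} consumes, so each of the four cases contributes at most one solution, Cases I and II cannot coexist, Cases III and IV cannot coexist, and since $F'$ is now a permutation we get $_{-1}\Delta_{F'}\le 2$. When $n\equiv 1\pmod 4$ the facts ($v_x^{3^m+1}=\chi(x)x$, $\gcd(d',3^n-1)=2$, $d'$ even) are exactly those used in the preceding theorem (the $n\equiv 3\pmod 4$ case there), so Cases III and IV each contribute at most two solutions and still cannot coexist, giving $_{-1}\Delta_{F'}\le 4$. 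I would then close by recomputing the boundary values $\Delta'(0)=1$ and $\Delta'(-1)=(-1)^{d'}$ and checking the exceptional targets $b=\pm1$ exactly as in the two theorems.

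The main obstacle I anticipate is not the case analysis itself---once $v_x^{3^m+1}=\chi(x)x$ is in hand it is formally identical to the computations above---but rather justifying that those computations may legitimately be quoted. That requires isolating the precise properties of the exponent on which the earlier proofs depend, namely the power identity, the value of $\gcd(\,\cdot\,,3^n-1)$, the parity controlling $\chi(x^{\,\cdot})=\chi(x)^{\,\cdot}$, and the auxiliary fact $\gcd(3^m+1,3^n-1)=2$, and then checking each of them for $d'$. Among these the only delicate point is the gcd/parity swap established in the second paragraph; everything else is inherited verbatim.
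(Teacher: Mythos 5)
Your argument is correct, but it is a genuinely different route from the paper's. The paper proves this corollary by lifting to the cubic extension: since $n$ is odd, $3n$ has the opposite residue modulo $4$ from... wait, rather, $3n\equiv 1\pmod 4$ exactly when $n\equiv 3\pmod 4$, and one checks that $\tfrac{1}{2}\bigl(3^{\frac{3n+1}{2}}-1\bigr)-d'=(3^n-1)d$, so the power map $x\mapsto x^{\frac{1}{2}(3^{(3n+1)/2}-1)}$ on $\gf(3^{3n})$ restricts on $\gf(3^n)$ to $F'$; Theorem \ref{m} (resp.\ the $n\equiv 3\pmod 4$ theorem) applied over $\gf(3^{3n})$ then bounds the number of solutions in the big field, hence a fortiori in $\gf(3^n)$, and the earlier results are quoted as black boxes. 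You instead stay inside $\gf(3^n)$, use $x^{d'}=\chi(x)x^{d}$ to show that $v_x=x^{d'}$ satisfies the same identity $v_x^{3^m+1}=\chi(x)x$, and establish the parity/gcd swap ($d'$ odd and $\gcd(d',3^n-1)=1$ iff $n\equiv 3\pmod 4$), after which the four-case analysis of the two theorems reruns verbatim with $u$ replaced by $v$. Your gcd computation and the observation $\chi(v_x)=\chi(x)^{d'}$ are correct, and these are indeed the only properties of the exponent that the earlier proofs consume, so the transfer is legitimate. What the paper's route buys is brevity and no need to reopen the earlier proofs; what yours buys is a self-contained argument in the base field that makes explicit exactly which features of the exponent matter (at the cost of having to re-verify, as you note, the cross-case eliminations in the even-exponent regime, where the paper's own Theorem for $n\equiv 3\pmod 4$ is itself only sketched).
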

\begin{proof}It can be calculated that $\gcd(d',3^n-1)\leq 2$. First we consider $n\equiv 3 (\mathrm{mod} ~4)$, i.e., $3n\equiv 1 (\mathrm{mod} ~4)$. By Theorem \ref{m},
\begin{equation}\label{meqn9}
(x+1)^{\frac{3^{\frac{3n+1}{2}}-1}{2}}+x^{\frac{3^{\frac{3n+1}{2}}-1}{2}}=b
\end{equation}
has at most two solutions in $\gf(3^{3n})$ for any $b\in\gf(3^{3n})$. Since $(3^n-1)|\frac{3^{\frac{3n+1}{2}}-1}{2}-d'$, equation (\ref{meqn9}) becomes $(x+1)^{d'}+x^{d'}=b$ any $x,b\in\gf(3^n)$. Therefore, this equation has at most two solutions in $\gf(3^n)$, i.e., $_{-1}\Delta_{F'}\leq2$. The other case can be proved similarly and we omit the details.
\end{proof}
\section{$-1$-differential uniformity of $x^{\frac{3^{n+1}-1}{8}}$ over $\gf(3^n)$}

In this section, let $F(x)=x^d$ be a power function over $\gf(3^n)$, where $n \equiv 1 (\mathrm{mod}~4)$ and $d=\frac{3^{n+1}-1}{8}$. It was proved in \cite{DMMPW} that $F(x)$ is an APN function. We consider the $-1$-differential uniformity of $F(x)$ as follows.

\begin{theorem}\label{8}Let $F(x)=x^d$ be a power function over $\gf(3^n)$, where $n \equiv 1 (\mathrm{mod}~4)$ and $d=\frac{3^{n+1}-1}{8}$. We have $_{-1}\Delta_F \leq 2$.
\end{theorem}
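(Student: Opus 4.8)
The plan is to follow the template of the proof of Theorem \ref{m}, with the Frobenius exponent $3^m+1$ there replaced by the exponent $4$ here. First I would record the exponent arithmetic: from $8d=3^{n+1}-1=3(3^n-1)+2$ one gets $\gcd(8d,3^n-1)=\gcd(2,3^n-1)=2$, and since $8\mid 3^{n+1}-1$ but $16\nmid 3^{n+1}-1$ the integer $d$ is odd; together with $v_2(3^n-1)=1$ this gives $\gcd(d,3^n-1)=1$ when $n\equiv 1\ (\mathrm{mod}\ 4)$. Hence $F$ is a permutation and Lemma \ref{power} reduces the problem to the equation $\Delta(x)=(x+1)^d+x^d=b$. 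The key local identity is that, for $x\neq 0$, $u_x^4=x^{4d}=x^{\frac{3^{n+1}-1}{2}}=\chi(x)x$, which plays exactly the role that $u_x^{3^m+1}=\chi(x)x$ plays in Theorem \ref{m}.

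Second, writing $u_{x+1}=b-u_x$ and raising to the fourth power, I would use the characteristic-$3$ identity $(b-u_x)^3=b^3-u_x^3$ to expand $(b-u_x)^4=b^4-bu_x^3-b^3u_x+u_x^4$; substituting $u_{x+1}^4=\chi(x+1)(x+1)$ and $u_x^4=\chi(x)x$ yields the analogue of the cubic relation in Theorem \ref{m}, namely $bu_x^3+b^3u_x=b^4+\chi(x)x-\chi(x+1)(x+1)$. The decisive point is that $u\mapsto bu^3+b^3u=bu(u^2+b^2)$ is an additive (linearized) polynomial, and it is a bijection of $\gf(3^n)$ because $u^2=-b^2$ is unsolvable for $b\neq 0$: here $-1$ is a nonsquare since $n$ is odd.

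Third, I would split into four cases according to $(\chi(x),\chi(x+1))$. In Cases I and II (where $\chi(x)=\chi(x+1)$) the right-hand side collapses to the constant $b^4\mp 1$, so bijectivity of $bu(u^2+b^2)$ gives a unique $u_x$, hence at most one solution each; the sum-to-$b$ trick of Theorem \ref{m} (if $x_1,x_2$ solve Cases I and II then $u_{x_1}+u_{x_2}=b$, forcing $u_{x_2}=u_{x_1+1}$ and a character contradiction) shows Cases I and II cannot both occur. In Cases III and IV (where $\chi(x)\neq\chi(x+1)$) the $x$-terms cancel and, completing to $(u_x+b)^4=u_x^4+bu_x^3+b^3u_x+b^4$, the relation becomes $(u_x+b)^4=-b^4-1$ and $(u_x+b)^4=-b^4+1$ respectively; since $\gcd(4,3^n-1)=2$ the fourth-power map is two-to-one, and the same character argument rules out two solutions inside one case, so each of III and IV contributes at most one.

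The hard part will be the final bookkeeping that brings the total from four down to two. In Theorem \ref{m} this is where Cases III and IV are shown incompatible, by raising the $(3^m+1)$-st power relation to the $3^m$-th power and exploiting $3^m(3^m+1)\equiv(3^m+1)+2\ (\mathrm{mod}\ 3^n-1)$. That step does \emph{not} transfer: for the exponent $4$ there is no Frobenius power $3^k$ with $4\cdot 3^k\equiv 6\ (\mathrm{mod}\ 3^n-1)$ (already impossible for $n=5$), so the relation $(u_{x_3}+b)^6+(u_{x_4}+b)^6=b^6$ that would produce the factor $A^2-b^2$ is unavailable, and in fact Cases III and IV can coexist. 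I would therefore replace that step by a direct square-class analysis of $w=u_xu_{x+1}$: one checks $\chi(w)=\chi(x)\chi(x+1)=-1$ in both Cases III and IV, and that $w$ is a nonsquare root of $w^2+b^2w-b^4+1=0$ (Case III) and of $w^2+b^2w-b^4-1=0$ (Case IV). The goal is then to show that whenever both III and IV occur, the unique solution $u$ of $bu^3+b^3u=b^4-1$ satisfies $\chi(u)\neq\chi(b-u)$, so that neither Case I nor Case II yields a solution; combined with the boundary bookkeeping ($\Delta(0)=1$, $\Delta(-1)=-1$, and the verification $\delta(1)=\delta(-1)=1$) and with Lemma \ref{power} this gives $_{-1}\Delta_F\leq 2$. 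This delicate quadratic-residue computation is where the real work lies; everything preceding it is a faithful transcription of the proof of Theorem \ref{m}.
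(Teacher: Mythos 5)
Your transcription of the machinery of Theorem \ref{m} is faithful up to and including the four-case analysis, and your observation that the specific Frobenius trick used there ($3^m(3^m+1)\equiv 3^m+3 \pmod{3^n-1}$) has no analogue for the exponent $4$ is correct. But the conclusion you draw from this --- that ``Cases III and IV can coexist'' --- is false, and the workaround you sketch (showing that Cases I and II are empty whenever III and IV both occur) is both based on that false premise and left entirely unexecuted; you yourself flag it as ``where the real work lies.'' As written, your argument delivers at most $_{-1}\Delta_F\leq 3$, and even that only conditionally on the unproved quadratic-residue claim (which, note, would have to be verified separately for the Case~I equation $bu^3+b^3u=b^4-1$ and the Case~II equation $bu^3+b^3u=b^4+1$).

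The missing step is an elementary characteristic-$3$ identity that replaces the Frobenius manipulation rather than abandoning the incompatibility of Cases III and IV. Summing $(u_{x_3}+b)^4=-b^4-1$ and $(u_{x_4}+b)^4=-b^4+1$ gives $(u_{x_3}+b)^4+(u_{x_4}+b)^4=b^4$, and writing $(u+b)^4=(u^2-bu+b^2)^2$ one checks directly that this is equivalent to
\[
\bigl(u_{x_3}^2-bu_{x_3}+u_{x_4}^2-bu_{x_4}+b^2\bigr)^2=-u_{x_3}(b-u_{x_3})\,u_{x_4}(b-u_{x_4}).
\]
Since $b-u_{x_3}=u_{x_3+1}$ and $b-u_{x_4}=u_{x_4+1}$, since $d$ is odd so that $\chi(u_y)=\chi(y)$, and since $\chi(-1)=-1$ for odd $n$, the right-hand side is nonzero with quadratic character $(-1)\cdot(-1)\cdot 1\cdot 1\cdot(-1)=-1$, while the left-hand side is a square: contradiction. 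Hence Cases III and IV are incompatible after all, the solution count in $\gf(3^n)\setminus\{0,-1\}$ drops to $2$ exactly as in Theorem \ref{m}, and no analysis of the interaction between Cases I--II and Cases III--IV is needed. Everything else in your proposal (the $\gcd$ computation, the bijectivity of $u\mapsto bu^3+b^3u$, the within-case counts, the incompatibility of Cases I and II, and the boundary values $\Delta(0)=1$, $\Delta(-1)=-1$) matches the paper's proof.
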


\begin{proof}Note that $\gcd(d,3^n-1)=1$, $F(x)$ is a permutation on $\gf(3^n)$. For $b\in\gf(3^n)$, we consider the $c$-differential equation
\begin{equation}\label{8eqn1}
\Delta(x)=(x+1)^d+x^d=b.
\end{equation}
Let $u_{x+1}=(x+1)^d$ and $u_x=x^d$. For $x\neq0$, note that
\begin{equation}\label{8eqn2}
u^4_x=x^{4d}=\chi(x)x.
\end{equation}
Let $x\in\gf(3^n)\setminus\{0,-1\}$ be a solution of (\ref{8eqn1}) for fixed $b\in\gf(3^n)$, then $u_{x},u_{x+1}\neq0$. Taking the $4$th power on both sides of $u_{x+1}=-u_x+b$, we have
\begin{equation}\label{8eqn3}
bu^3_x+b^3u_x=-\chi(x+1)(x+1)+\chi(x)x+b^4.
\end{equation}
For (\ref{8eqn3}), we distinguish the following four cases.

Case I. $\chi(x+1)=\chi(x)=1$.

In this case, we have $bu^3_x+b^3u_x=b^4-1$ from (\ref{8eqn3}). Since the mapping $u_x \mapsto bu^3_x+b^3u_x$ is bijective on $\gf(3^n)$, we can find a unique $u_x$. Because $F(x)$ is a permutation, a unique $x$ can be found from the $u_x$. This case has at most one solution.

Case II. $\chi(x+1)=\chi(x)=-1$.

This case has at most one solution. The discussion is similar to that of Case I and we omit it.

Case III. $\chi(x+1)=1, \chi(x)=-1$. From (\ref{8eqn3}), we have $bu^3_x+b^3u_x=x+b^4-1$ in this case, and then we have
\begin{equation}\label{8eqn4}
(u_x+b)^4=-b^4-1
\end{equation}
by (\ref{8eqn2}). If there are two distinct solutions in this case, namely $x_3$ and $x'_3$, then $u_{x_3}$ and $u_{x'_3}$ satisfy (\ref{8eqn4}) with $\chi(x_3+1)=\chi(x'_3+1)=1$ and $\chi(x_3)=\chi(x'_3)=-1$. Consequently  $(u_{x_3}+b)^4=(u_{x'_3}+b)^4$ can be obtained from (\ref{8eqn4}). Then we have $u_{x_3}+b=-(u_{x'_3}+b)$ since $x_3\neq x'_3$, which leads to $u_{x_3}=b-u_{x'_3}=u_{x'_3+1}$. However, the above conclusion contradicts to $\chi(u_{x_3})=\chi(x_3)=-1$ and $\chi(u_{x'_3+1})=\chi(x'_3+1)=1$. We conclude that Case III has at most one solution.

Case IV. $\chi(x+1)=-1, \chi(x)=1$. In this case we have $bu^3_x+b^3u_x=-x+b^4+1$ from (\ref{8eqn3}), and then
\begin{equation}\label{8eqn5}
(u_x+b)^4=-b^4+1
\end{equation}
by (\ref{8eqn2}). Similar to Case III, we can obtain that this case has at most one solution.

Next we will prove for fixed $b$, (\ref{8eqn1}) cannot have solutions in Case I and Case II simultaneously. Otherwise, suppose that $x_1$ and $x_2$ are solutions of (\ref{8eqn1}) in Case I and Case II with $\chi(x_1+1)=\chi(x_1)=1$ and $\chi(x_2+1)=\chi(x_2)=-1$ respectively. Then we have $bu^3_{x_1}+b^3u_{x_1}=b^4-1$ and $bu^3_{x_2}+b^3u_{x_2}=b^4+1$, where $u_{x_1}$ and $u_{x_2}$ we defined before. Now we have $b(u_{x_1}+u_{x_2})^3+b^3(u_{x_1}+u_{x_2})=-b^4$ and the consequent $u_{x_1}+u_{x_2}=b$. From (\ref{8eqn1}), we can obtain $u_{x_2}=u_{x_1+1}$, which contradicts $\chi(u_{x_2})=\chi(x_2)=-1$ and $\chi(u_{x_1+1})=\chi(x_1+1)=1$. Therefore, we conclude that (\ref{8eqn1}) has at most one solution in Cases I and II for fixed $b\in\gf(3^n)$.

Then we prove for fixed $b$, (\ref{8eqn1}) cannot have solutions in Case III and Case IV simultaneously. Otherwise, suppose that $x_3$ and $x_4$ are solutions of (\ref{8eqn1}) in Case III and Case IV with $\chi(x_3+1)=1, \chi(x_3)=-1$ and $\chi(x_4+1)=-1, \chi(x_4)=1$ respectively. Then $x_3$ and $x_4$ satisfy (\ref{8eqn4}) and (\ref{8eqn5}) respectively. By the sum of (\ref{8eqn4}) and (\ref{8eqn5}), we have $(u_{x_3}+b)^4+(u_{x_4}+b)^4=b^4$, that is,
\begin{equation}\label{8eqn6}
(u^2_{x_3}-bu_{x_3}+u^2_{x_4}-bu_{x_4}+b^2)^2=-u_{x_3}(b-u_{x_3})u_{x_4}(b-u_{x_4}).
\end{equation}
Note that $b-u_{x_3}=u_{x_3+1}$ and $b-u_{x_4}=u_{{x_4}+1}$, the right-hand side of (\ref{8eqn6}) is a nonzero nonsquare element, which is a contradiction. We conclude that (\ref{8eqn1}) has at most one solution in Cases III and IV for fixed $b\in\gf(3^n)$. From the above discussions, (\ref{8eqn1}) has at most two solutions in $\gf(3^n)\setminus\{0,-1\}$.

One can easily calculate that $\Delta(0)=1$ and $\Delta(-1)=-1$. For $b=1$ and $b=-1$, it can be verified that $\Delta(x)=1$ and $\Delta(x)=-1$ has no solution in $\gf(3^n)\setminus\{0,-1\}$, i,e, $\delta(1)=\delta(-1)=1$. Then we obtain $\delta(b)\leq 2$ for any $b$, this leads to $_{-1}\Delta_F\leq2$ by Lemma \ref{power} and $\gcd(d,3^n-1)=1$ .

\end{proof}

For $n \equiv 3 (\mathrm{mod}~4)$ and $d'=d+\frac{3^n-1}{2}$, we list the following theorems without proof.
\begin{theorem}Let $F(x)=x^d$ be a power function over $\gf(3^n)$, where $n \equiv 3 (\mathrm{mod}~4)$ and $d=\frac{3^{n+1}-1}{8}$. We have $_{-1}\Delta_F \leq 4$.
\end{theorem}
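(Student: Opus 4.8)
The plan is to follow the proof of Theorem~\ref{8} almost verbatim, the only structural difference being that here $F$ is two-to-one rather than a permutation, which doubles the admissible counts in two of the four cases. First I would record the arithmetic that drives everything. Since $\gcd(n+1,n)=1$, we have $\gcd(3^{n+1}-1,3^n-1)=3^{\gcd(n+1,n)}-1=2$, hence $\gcd(8d,3^n-1)=2$; because $n\equiv 3\pmod 4$ gives $4\mid n+1$ and $3$ has order $4$ modulo $16$, we get $16\mid 3^{n+1}-1=8d$, so $d$ is even and one deduces $\gcd(d,3^n-1)=2$. Thus $F$ is two-to-one with $x^d=(-x)^d$. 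Moreover $3^n-1=2\cdot(\text{odd})$ for odd $n$, so $-1$ is a non-square and $\chi(-x)=-\chi(x)$; consequently, among the two $d$-th-power preimages $\{x,-x\}$ of any element of $\image(F)$, exactly one has $\chi=1$ and one has $\chi=-1$. The identity $u_x^4=x^{4d}=\chi(x)x$ of (\ref{8eqn2}) still holds because $4d\equiv\frac{3^n-1}{2}+1\pmod{3^n-1}$, so equations (\ref{8eqn1})--(\ref{8eqn3}) transfer unchanged.

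Next I would run the same split on $\big(\chi(x+1),\chi(x)\big)$. In Cases~I and~II, relation (\ref{8eqn3}) reduces to $bu_x^3+b^3u_x=b^4\mp1$, and the $\gf(3)$-linear map $u\mapsto bu^3+b^3u$ is a bijection (its kernel would force $u^2=-b^2$, impossible as $-1$ is a non-square), so $u_x$ is unique; by the preimage remark each prescribed value of $\chi(x)$ then pins down a single $x$, giving at most one solution in each of Cases~I and~II. In Cases~III and~IV the relation becomes $(u_x+b)^4=-b^4\mp1$, i.e.\ (\ref{8eqn4}) and (\ref{8eqn5}); since $\gcd(4,3^n-1)=2$ the quartic has at most two roots, which are negatives of one another, and each root yields a single admissible $x$ of the prescribed character, so Cases~III and~IV contribute at most two solutions each.

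The technical heart, and where I expect the main work, is showing that Cases~III and~IV cannot both be populated for a fixed $b$; this copies the argument around (\ref{8eqn6}). Summing (\ref{8eqn4}) and (\ref{8eqn5}) gives $(u_{x_3}+b)^4+(u_{x_4}+b)^4=b^4$, and the characteristic-$3$ identity $(s+b)^4=(s^2-bs+b^2)^2$ recasts this as $(u_{x_3}^2-bu_{x_3}+u_{x_4}^2-bu_{x_4}+b^2)^2=-u_{x_3}(b-u_{x_3})u_{x_4}(b-u_{x_4})$, exactly as in (\ref{8eqn6}). Using $b-u_{x_3}=u_{x_3+1}$, $b-u_{x_4}=u_{x_4+1}$ together with the character assignments $\chi(u_{x_3})=-1$, $\chi(u_{x_3+1})=1$, $\chi(u_{x_4})=1$, $\chi(u_{x_4+1})=-1$, the right-hand side is a nonzero non-square while the left-hand side is a square, a contradiction. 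Hence Cases~III and~IV jointly yield at most two solutions, and (\ref{8eqn1}) has at most $1+1+2=4$ solutions in $\gf(3^n)\setminus\{0,-1\}$.

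Finally I would settle the boundary, which is the one genuinely new point relative to the permutation case: because $d$ is even we have $\Delta(0)=\Delta(-1)=1$, so for $b=1$ both $0$ and $-1$ are solutions and a naive count would only give $\delta(1)\le 6$. To control this I would use the involution $x\mapsto -1-x$, under which $\Delta$ is invariant (again since $d$ is even) with unique fixed point $x=1$; it sends Case~I bijectively to Case~II, so $\#\mathrm{I}=\#\mathrm{II}$. For $b=1$ a direct substitution shows Case~I is empty (the equation degenerates to $u_x^3+u_x=0$, forcing the excluded $u_x=0$), whence Case~II is empty too, and the two quartics of Cases~III,~IV force $u_x\in\{0\}$ or $u_x=-1\notin\image(F)$, so no punctured solution survives and $\delta(1)=2$. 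For $b\neq 1$ the points $0,-1$ do not contribute and the punctured bound gives $\delta(b)\le 4$. Combining $\delta(b)\le 4$ for all $b$ with Lemma~\ref{power} and $\gcd(d,3^n-1)=2$ yields $_{-1}\Delta_F\le 4$.
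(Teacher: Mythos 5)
Your overall architecture is sound and is exactly what the paper intends (it states this theorem without proof, deferring to the method of Theorem~\ref{8} and of its $n\equiv 3\ (\mathrm{mod}\ 4)$ analogue in Section~2): Cases I and II give one solution each, Cases III and IV give two each but cannot coexist, hence at most $1+1+2=4$ punctured solutions, and the boundary at $b=1$ forced by $\Delta(0)=\Delta(-1)=1$ is checked separately. Two local points need repair. First, since $d$ is even here, $\chi(u_x)=\chi(x)^d=1$ for every $x\neq 0$, so your character assignments $\chi(u_{x_3})=-1$ and $\chi(u_{x_4+1})=-1$ are false: all four of $u_{x_3},u_{x_3+1},u_{x_4},u_{x_4+1}$ are nonzero squares. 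The step survives, because all that is needed is that their product is a nonzero square, so the right-hand side of (\ref{8eqn6}) is still $-1$ times a nonzero square, hence a nonsquare, while the left-hand side is a square; but the justification must be rewritten. (This is also precisely why the uniqueness-within-Case-III argument of Theorem~\ref{8}, which relies on $\chi(u_{x})=\chi(x)$, is unavailable for even $d$ --- you correctly replaced it by the two-root count.)

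Second, in the $b=1$ verification the Case~III quartic is $(u_x+1)^4=-b^4-1=1$, whose roots are $u_x\in\{0,1\}$, not just $u_x=0$; the root $u_x=1$ must be excluded separately by noting that $x^d=1$ forces $x=\pm1$, and neither $x=1$ (which has $\chi(x)=+1$) nor $x=-1$ (outside the punctured domain) is admissible in Case~III. Your clause ``$u_x=-1\notin\image(F)$'' disposes only of Case~IV, where $(u_x+1)^4=-b^4+1=0$. A final cosmetic point: the bijectivity of $u\mapsto bu^3+b^3u$ requires $b\neq 0$, so one should note that $\delta(0)=0$ (for $x\notin\{0,-1\}$ both $u_x$ and $u_{x+1}$ are nonzero squares, so $u_x+u_{x+1}=0$ would make $-1$ a square). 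With these repairs the proof is complete and follows the paper's route.
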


\begin{corollary}Let $F'(x)=x^{d'}$ be a power function over $\gf(3^n)$, where $n$ is an odd integer and $d'=\frac{3^{n+1}-1}{8}+\frac{3^n-1}{2}$. We have $_{-1}\Delta_{F'}\leq 2$ when $n\equiv 3 (\mathrm{mod} ~4)$ and $_{-1}\Delta_{F'}\leq 4$ when $n\equiv 1 (\mathrm{mod} ~4)$.
\end{corollary}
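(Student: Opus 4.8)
The plan is to adapt, essentially verbatim, the strategy used for the corollary following Theorem~\ref{m}: I would lift the $-1$-differential equation to the cubic extension $\gf(3^{3n})$, where $d'$ coincides, on the subfield $\gf(3^n)$, with the Section~3 exponent for $3n$ in place of $n$. Set $D=\frac{3^{3n+1}-1}{8}$; since $n$ is odd, $3n+1$ is even and $D\in\Z$, so $D$ is precisely the exponent treated in this section with $n$ replaced by $3n$. Because $3n\equiv 1\pmod 4$ exactly when $n\equiv 3\pmod 4$, and $3n\equiv 3\pmod 4$ exactly when $n\equiv 1\pmod 4$, I would, when $n\equiv 3\pmod 4$, invoke Theorem~\ref{8} over $\gf(3^{3n})$ (which bounds the number of solutions of $(y+1)^D+y^D=b$ by two for every $b\in\gf(3^{3n})$), and, when $n\equiv 1\pmod 4$, invoke the preceding theorem (the case $n\equiv 3\pmod 4$) over $\gf(3^{3n})$ (which bounds that number by four).

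The central identity to establish is the divisibility $(3^n-1)\mid(D-d')$. A direct computation gives
\[
D-d'=(3^n-1)\left(\frac{3^{n+1}(3^n+1)-4}{8}\right),
\]
and the bracketed number is an integer because, for odd $n$, one has $3^{n+1}\equiv 1\pmod 8$ and $3^n+1\equiv 4\pmod 8$, whence $3^{n+1}(3^n+1)-4\equiv 0\pmod 8$. It follows that $y^D=y^{d'}$ for every $y\in\gf(3^n)$ (the case $y=0$ being trivial), and likewise $(y+1)^D=(y+1)^{d'}$. Hence for $b\in\gf(3^n)$ the equation $(y+1)^D+y^D=b$ restricts on $\gf(3^n)$ to $(x+1)^{d'}+x^{d'}=b$; since the former has at most two (resp.\ four) solutions in the whole of $\gf(3^{3n})$, it has at most that many in $\gf(3^n)$, giving $\delta(b)\le 2$ (resp.\ $\delta(b)\le 4$) for every $b\in\gf(3^n)$.

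It then remains to control the gcd term in Lemma~\ref{power}. From $8d'=7\cdot 3^n-5$ I would note that, writing $g=\gcd(d',3^n-1)$, we have $g\mid 8d'$ and $g\mid 7(3^n-1)$, so $g\mid\big(8d'-7(3^n-1)\big)=2$; thus $\gcd(d',3^n-1)\le 2$. Combining this with the solution bound above, Lemma~\ref{power} yields $_{-1}\Delta_{F'}=\max\{\max_b\delta(b),\gcd(d',3^n-1)\}$, which is $\le 2$ when $n\equiv 3\pmod 4$ and $\le 4$ when $n\equiv 1\pmod 4$, as claimed.

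The only delicate point I anticipate is the modular bookkeeping behind $(3^n-1)\mid(D-d')$, specifically checking that $8$ divides $3^{n+1}(3^n+1)-4$, together with correctly matching the parity split $3n\bmod 4$ to the appropriate Section~3 theorem; everything else is a faithful transcription of the earlier corollary's argument, and no permutation or APN property of $F'$ in the large field is needed, only the solution-count bounds inherited from the theorems.
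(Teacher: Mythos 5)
Your proof is correct and follows exactly the route the paper intends: the paper states this corollary without proof, but its proof of the analogous corollary in Section~2 is precisely this lifting to $\gf(3^{3n})$ via the congruence $D\equiv d'\pmod{3^n-1}$ together with the swap $n\bmod 4\leftrightarrow 3n\bmod 4$, and your divisibility and $\gcd$ computations (which the paper leaves implicit) check out.
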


\section{$-1$-differential uniformity of $x^{\frac{3^n+1}{4}+\frac{3^n-1}{2}}$ over $\gf(3^n)$}

It was proved in \cite{HRS} that the power function $x^d$ is an APN function over $\gf(3^n)$, where $n$ is an odd integer and $d=\frac{3^n+1}{4}+\frac{3^n-1}{2}$. The $-1$-differential uniformity is considered as follows.
\begin{theorem}\label{ternaryfromapn}Let $F(x)=x^d$ be a power function over $\gf(3^n)$, where $n$ is odd and $d=\frac{3^n+1}{4}+\frac{3^n-1}{2}$. Then $_{-1}\Delta_F\leq 4$.
\end{theorem}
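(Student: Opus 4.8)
The plan is to follow the template of Theorems \ref{m} and \ref{8}. First I would record the arithmetic of the exponent. Since $4d=3^{n+1}-1\equiv 2\pmod{3^n-1}$ and $8\mid 3^{n+1}-1$ (as $n+1$ is even), one gets $\gcd(d,3^n-1)=2$; in particular $d$ is even and $F$ is two-to-one. For $x\neq 0$ this yields the key identity
\[ u_x^2=x^{2d}=x^{\frac{3^{n+1}-1}{2}}=\chi(x)\,x, \]
where $u_x=x^d$ and $\chi$ is the quadratic character. Writing the equation $\Delta(x)=(x+1)^d+x^d=b$ as $u_{x+1}=b-u_x$ and squaring, the identity turns it into $\chi(x+1)(x+1)=b^2+bu_x+\chi(x)\,x$. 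Substituting $x=\chi(x)u_x^2$ and splitting according to the signs $(\chi(x),\chi(x+1))$ gives, exactly as before, a linear equation in $u_x$ in the two equal-sign cases (I and II) and a quadratic in the two opposite-sign cases; the latter reduce to the clean relations $(u_{x_3}+b)^2=-(b^2+1)$ in Case III and $(u_{x_4}+b)^2=1-b^2$ in Case IV.

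Counting then gives at most one solution in each of Cases I, II and at most two in each of Cases III, IV. A point worth stressing is that since $d$ is even one has $\chi(u_x)=1$ for every $x$; hence, in contrast to the permutation case of Theorem \ref{m}, the character contradiction that forbids Cases I and II from coexisting is no longer available, and indeed they may occur together. So I would not try to exclude I and II, and the naive bound is $1+1+2+2=6$; the whole force of the theorem is to cut this to $4$.

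The decisive step, and the part I expect to be hardest, is to show that Cases III and IV cannot both occur for a fixed $b$; granting this, the bound becomes $1+1+2=4$. The clean way to phrase it is this: a genuine Case III solution forces both roots $u_{x_3}$, $b-u_{x_3}$ of $u^2-bu+(1-b^2)$ to be squares (they equal $x_3^d$ and $(x_3+1)^d$, and $d$ is even), and likewise a Case IV solution forces both roots of $u^2-bu-(1+b^2)$ to be squares. Their discriminants are $-(b^2+1)$ and $1-b^2$, so setting $w_3^2=-(b^2+1)$ and $w_4^2=1-b^2$ the task reduces to the two character identities $\chi(w_3-b)=-\chi(b)$ and $\chi(w_4-b)=\chi(b)$: these say that Case III is realizable only when $\chi(b)=-1$ while Case IV (and, via the explicit roots $u_x+b=\pm b^{-1}$, also the pair I, II) is realizable only when $\chi(b)=1$, so the two can never be realized simultaneously. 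I would try to establish these identities in the spirit of Theorems \ref{m} and \ref{8}, passing to fourth powers $(u_{x_3}+b)^4=b^4-b^2+1$ and $(u_{x_4}+b)^4=b^4+b^2+1$ and exhibiting the product $-u_{x_3}u_{x_3+1}u_{x_4}u_{x_4+1}$ (all four factors squares, hence a nonsquare after the sign) as simultaneously a square. The obstacle is that here the sum of these fourth powers is $-(b^4+1)$ rather than $b^4$, so the factorization of Theorem \ref{8} does not transcribe and a genuinely new combination is needed. A promising alternative is to note that $(b,w_4)$ lies on the conic $X^2+Y^2=1$ and $(b,w_3)$ on $X^2+Y^2=-1$, i.e. $b+iw_4$ and $b+iw_3$ are elements of $\gf(3^{2n})$ of norm $1$ and $-1$ respectively; since $\chi(N(z))=z^{(3^{2n}-1)/2}$, this links $\gf(3^n)$-squareness of $w_j-b$ to squareness in $\gf(3^{2n})$ and should pin $\chi(w_j-b)$ against $\chi(b)$, which is exactly the required square/nonsquare determination.

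Finally I would dispose of the boundary data. Because $d$ is even, $\Delta(0)=\Delta(-1)=1$, so both excluded points fall on $b=1$; a direct check that $\Delta(x)=1$ has no solution in $\gf(3^n)\setminus\{0,-1\}$ then keeps $\delta(1)$ within the bound. Together with the case analysis this gives $\delta(b)\le 4$ for every $b$, and since $\gcd(d,3^n-1)=2$, Lemma \ref{power} yields $_{-1}\Delta_F\le 4$.
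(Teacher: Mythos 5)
Your reduction to the four character cases is sound and lands on exactly the same pair of quadratics as the paper: in Case III the admissible values of $u_x=x^d$ are the roots of $u^2-bu+(1-b^2)=0$, in Case IV the roots of $u^2-bu-(1+b^2)=0$ (the paper reaches these by writing $x=\pm\beta^2$ and setting $\gamma=\chi(\beta)\beta$, which is precisely your $u_x$), and the counts $1,1,2,2$ together with the observation that Cases I and II may genuinely coexist are all correct. But the theorem lives or dies on the mutual exclusion of Cases III and IV, and that step is not proved in your proposal: you reduce it to the character identities $\chi(w_3-b)=-\chi(b)$ and $\chi(w_4-b)=\chi(b)$, concede that the fourth-power factorization of Theorem \ref{8} does not carry over (correctly so --- here the sum of fourth powers is $-(b^4+1)$, not $b^4$), and leave the norm-form alternative at the level of ``should pin down''. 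As it stands this is a plan, not an argument; note also that each $w_j$ is only defined up to sign, and what you actually need is the two-sided condition $\chi(b+w_j)=\chi(b-w_j)$ (both roots $-b\mp w_j$ square), which a single identity in $\chi(w_j-b)$ does not by itself deliver. Without this step the bound you can certify is $6$, not $4$.

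The paper closes the gap by a different, self-contained device that you may want to adopt. Assuming both cases occur for the same $b\neq-1$, the involution $x\mapsto-x-1$ forces each of Cases III and IV to contribute a full pair of solutions, hence four \emph{square} elements: $\gamma_1,\gamma_2$ (the roots of the Case III quadratic) and $\gamma_3,\gamma_4$ (the roots of the Case IV quadratic), with $\gamma_1+\gamma_2=\gamma_3+\gamma_4=b$, $\gamma_1\gamma_2=1-b^2$, $\gamma_3\gamma_4=-1-b^2$. These relations give $\gamma_1^2+\gamma_2^2+\gamma_3^2+\gamma_4^2=0$; normalizing $\delta_i=\gamma_i/\gamma_4$ (still squares) and eliminating $\delta_3=\delta_1+\delta_2-1$ produces a quadratic in $\delta_1$ over $\gf(3^n)$ whose discriminant is $-\delta_2$, a nonzero nonsquare since $n$ is odd --- a contradiction. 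This is exactly the ``genuinely new combination'' your sketch says is needed; the rest of your write-up (the arithmetic of $d$, the identity $u_x^2=\chi(x)x$, the boundary checks at $b=\pm1$, and the appeal to Lemma \ref{power} with $\gcd(d,3^n-1)=2$) is fine.
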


\begin{proof}One can easily obtain that $d$ is even and $\gcd(d,3^n-1)=2$. Note that $\chi(-1)=-1$ since $n$ is odd. We consider the $-1$-differential equation
\begin{equation}\label{eqnternaryfromapn}
\Delta(x)=(x+1)^d+x^d=b.
\end{equation}
When $b=0$, (\ref{eqnternaryfromapn}) has no solution. For fixed $b\in\gf(3^n)^*$, let $x\in\gf(3^n)\setminus\{0,-1\}$ is a solution of (\ref{eqnternaryfromapn}), we distinguish the following four cases.

Case I.  $\chi(x+1)=\chi(x)=1$. Let $x+1=\alpha^2$ and $x=\beta^2$ for $\alpha,\beta\in\gf(3^n)^*$, then $\alpha^2-\beta^2=1$. We can obtain $\chi(\alpha)\alpha+\chi(\beta)\beta=b$ from (\ref{eqnternaryfromapn}). We have
\[\beta^2+1=\alpha^2=(\chi(\alpha)\alpha)^2=(b-\chi(\beta)\beta)^2=b^2+b\chi(\beta)\beta+\beta^2.\]
One can obtain $\chi(\beta)\beta=b^{-1}-b$ and $x=\beta^2=(\chi(\beta)\beta)^2=(b^{-1}-b)^2$. This case has at most one solution.

Case II. $\chi(x+1)=\chi(x)=-1$. Let $x+1=-\alpha^2$ and $x=-\beta^2$ for $\alpha,\beta\in\gf(3^n)^*$, then $\alpha^2-\beta^2=-1$. Similar to Case I, we can obtain $x=-(b+b^{-1})^2$. This case has at most one solution.

Case III. $\chi(x+1)=1, \chi(x)=-1$. Let $x+1=\alpha^2$ and $x=-\beta^2$ for $\alpha,\beta\in\gf(3^n)^*$, then $\alpha^2+\beta^2=1$. We can obtain $\chi(\alpha)\alpha+\chi(\beta)\beta=b$ from (\ref{eqnternaryfromapn}). Let $\gamma=\chi(\beta)\beta$, which is a square element in $\gf(3^n)$. Then $\gamma^2=\beta^2$ and $\gamma$ satisfies $(b-\gamma)^2+\gamma^2=1$, i.e.
\begin{equation}\label{gamma1}
\gamma^2-b\gamma+1-b^2=0,
\end{equation}
which is a quadratic equation on $\gamma$. Equation (\ref{gamma1}) has most two solutions, then we can obtain at most two $x$'s since $x=-\gamma^2$. This case has at most two solutions.

Case IV. $\chi(x+1)=-1, \chi(x)=1$. Let $x+1=-\alpha^2$ and $x=\beta^2$ for $\alpha,\beta\in\gf(3^n)^*$, then $\alpha^2+\beta^2=-1$. We can obtain $\chi(\alpha)\alpha+\chi(\beta)\beta=b$ from (\ref{eqnternaryfromapn}). Let $\gamma=\chi(\beta)\beta$, which is a square element in $\gf(3^n)$. Then $\gamma^2=\beta^2$ and $\gamma$ satisfies $(b-\gamma)^2+\gamma^2=-1$, i.e.
\begin{equation}\label{gamma2}
\gamma^2-b\gamma-1-b^2=0,
\end{equation}
which is a quadratic equation on $\gamma$. Equation (\ref{gamma2}) has most two solutions, then we can obtain at most two $x$'s since $x=\gamma^2$. This case has at most two solutions.

Note that $x$ is a solution of (\ref{eqnternaryfromapn}) if and only if $-x-1$ is a solution of (\ref{eqnternaryfromapn}). This implies that when $x\neq 1$ (the corresponds $b\neq-1$), if Case III (the same for Case IV) has solutions, it must has two solutions.  Next we prove that for fixed $b\in\gf(3^n)\setminus\{0,-1\}$, (\ref{eqnternaryfromapn}) cannot have solution in Case III and Case IV simultaneously. Suppose on the contrary that $x_1$, $x_2$ are distinct solutions of (\ref{eqnternaryfromapn}) for some given $b$ in Case III, and $x_3$, $x_4$ are distinct solutions of (\ref{eqnternaryfromapn}) for the same $b$ in Case IV. By the discussions above, each $x_i, 1\leq i\leq4$ corresponds to square element $\gamma_i$. Moreover, $\gamma_1$, $\gamma_2$ are the two solutions of (\ref{gamma1}), and $\gamma_3$, $\gamma_4$ are the two solutions of (\ref{gamma2}). They satisfy $\gamma_1+\gamma_2=\gamma_3+\gamma_4=b$, $\gamma_1\gamma_2=1-b^2$ and $\gamma_3\gamma_4=-1-b^2$. We can obtain
$\gamma^2_1+\gamma^2_2+\gamma^2_3+\gamma^2_4=(\gamma_1+\gamma_2)^2+\gamma_1\gamma_2+(\gamma_3+\gamma_4)^2+\gamma_3\gamma_4=0$.
Since $\gamma_4\neq0$, let $\delta_i=\gamma_i/\gamma_4$, $1\leq i \leq 3$, then $\delta_1,\delta_2$ and $\delta_3$ are square elements, and they satisfy $\delta_1+\delta_2-\delta_3-1=0$ and $\delta^2_1+\delta^2_2+\delta^2_3+1=0$. Replace by $\delta_3=\delta_1+\delta_2-1$, we have the following quadratic equation on $\delta_1$.
\[\delta^2_1-(\delta_2-1)\delta_1+(\delta^2_2-\delta_2+1)=0.\]
The discriminate of the above quadratic equation is $\Delta=(\delta_2-1)^2-(\delta^2_2-\delta_2+1)=-\delta_2$, which is a nonzero nonsquare element in $\gf(3^n)$. It contradicts to $\delta_1\in\gf(3^n)$. Then we proved that for $b\in\gf(3^n)^*$, (\ref{eqnternaryfromapn}) has at most $4$ solutions in $\gf(3^n)\setminus\{0,-1\}$.

One can easily check that $\Delta(0)=\Delta(-1)=1$. For $b=1$, it can be verified that $\Delta(x)=1$ has no solution in the four cases, i.e., $\Delta(x)=1$ has no solution in $\gf(3^n)\setminus\{0,-1\}$, $\delta(1)=2$. For $b=-1$, it can be verified that $x=1$ is the only solution of (\ref{eqnternaryfromapn}), i.e. $\delta(-1)=1$. This with the discussions above leads to $_{-1}\Delta_F\leq 4$.
\end{proof}

\section{$-1$-differential uniformity of $x^{(3^{\frac{n+1}{4}}-1)(3^{\frac{n+1}{2}}+1)}$ over $\gf(3^n)$}
In \cite{ZW11}, the authors studied the power function $F(x)=x^d$ over $\gf(3^n)$, where $n \equiv 3 (\mathrm{mod}~4)$ and $d=(3^{\frac{n+1}{4}}-1)(3^{\frac{n+1}{2}}+1)$. It was shown that $x^d$ is an APN function.
In what follows, we discuss the $-1$-differential uniformity of $F(x)$.
\begin{theorem}Let $F(x)=x^d$ be a power function over $\gf(3^n)$, where $n \equiv 3 (\mathrm{mod}~4)$, $d=(3^m-1)(3^{2m}+1)$ and $m=\frac{n+1}{4}$. Then $_{-1}\Delta_F\leq 4$.
\end{theorem}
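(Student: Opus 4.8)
The plan is to follow the template of Theorem \ref{m}, reducing to a bound on $\delta(b)=\#\{x\in\gf(3^n):(x+1)^d+x^d=b\}$. First I would record the arithmetic of the exponent. Since $n=4m-1$ we have $\gcd(m,n)=\gcd(2m,n)=1$, whence $\gcd(3^m-1,3^n-1)=\gcd(3^{2m}+1,3^n-1)=2$, and because $v_2(3^n-1)=1$ this forces $\gcd(d,3^n-1)=2$; by Lemma \ref{power} it then suffices to prove $\delta(b)\le 4$ for every $b$. Writing $u_x=x^d$ and $u_{x+1}=(x+1)^d$, the equation reads $u_x+u_{x+1}=b$, and the engine of the proof is the pair of exponent identities, valid for $x\neq 0$,
\[u_x^{3^m+1}=x^{d(3^m+1)}=x^{3^{n+1}-1}=x^2,\qquad u_x^{(3^m+1)/2}=x^{(3^n+1)/2}=\chi(x)\,x,\]
the first from $d(3^m+1)=3^{4m}-1\equiv 2\pmod{3^n-1}$ and the second from $d\cdot\frac{3^m+1}{2}\equiv\frac{3^n+1}{2}\pmod{3^n-1}$ (note $3^m+1$ is even, so the half-power makes sense).

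Next I would raise $u_{x+1}=b-u_x$ to the $(3^m+1)$-th power, use Frobenius additivity together with $u_x^{3^m+1}=x^2$ and $u_{x+1}^{3^m+1}=(x+1)^2$, and observe that $(x+1)^2-x^2=2x+1=-x+1$. This collapses to the single relation $b\,u_x^{3^m}+b^{3^m}u_x=b^{3^m+1}+x-1$, and, after adding $u_x^{3^m+1}+b^{3^m+1}$, to $(u_x+b)^{3^m+1}=x^2+x-1-b^{3^m+1}$. I would then split into the four cases according to the signs of $\chi(x+1),\chi(x)$, substituting $x=\chi(x)\,u_x^{(3^m+1)/2}$ so that each case becomes an equation in the single unknown $u_x$. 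The symmetry $x\mapsto -1-x$ (legitimate since $d$ is even and $\chi(-1)=-1$ because $n$ is odd) sends $(\chi(x+1),\chi(x))\mapsto(-\chi(x),-\chi(x+1))$; it interchanges the two diagonal cases and fixes each mixed case, so solutions occur in pairs $\{x,-1-x\}$ and the pair-invariant $x^2+x$ alone determines $(u_x+b)^{3^m+1}$.

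The main obstacle is precisely where this argument departs from Theorem \ref{m}. There the analogous $(3^m+1)$-th power identity is $u_x^{3^m+1}=\chi(x)x$, which is \emph{linear} in $x$, so in every case the linear term cancels, $(u_x+b)^{3^m+1}$ becomes a constant, and $\gcd(3^m+1,3^n-1)=2$ yields at most two solutions per case at once. Here the corresponding identity $u_x^{3^m+1}=x^2$ is \emph{quadratic}, the cancellation fails, the right-hand side $x^2+x-1-b^{3^m+1}$ genuinely depends on $x$, and after the substitution $x=\chi(x)u_x^{(3^m+1)/2}$ the per-case equation still contains the non-Frobenius power $u_x^{(3^m+1)/2}$. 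Thus the bound of two solutions per case is no longer a mere linear-map statement; I expect it to need a separate quadratic-character analysis, exploiting that $\chi(x)\chi(x+1)=+1$ in the diagonal cases and $-1$ in the mixed cases to force the relevant discriminant-type quantity into the squares and so limit the solutions.

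To exclude simultaneous solutions in the two mixed cases I would proceed exactly as in Theorem \ref{m}: take the two case identities, add them, raise to the $3^m$-th power, combine the two expressions, and read off that one side is a square while the other is a nonsquare (this is where $\chi(-1)=-1$, hence $n$ odd, is used), obtaining a contradiction. Finally I would dispose of the exceptional inputs directly: $\Delta(0)=\Delta(-1)=1$ and $\Delta(1)=-1$; moreover $\delta(0)=0$, because $(x+1)^d+x^d=0$ would force $\big((x+1)/x\big)^d=-1$, impossible since $x\mapsto x^d$ has image the squares while $-1$ is a nonsquare. Checking that $\delta(1)$ and $\delta(-1)$ stay within the allowed count then yields $_{-1}\Delta_F\le 4$ via Lemma \ref{power} and $\gcd(d,3^n-1)=2$.
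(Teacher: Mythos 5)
Your setup is sound --- the gcd computation, the identities $u_x^{3^m+1}=x^2$ and $u_x^{(3^m+1)/2}=\chi(x)x$, the reduction to $bu_x^{3^m}+b^{3^m}u_x=b^{3^m+1}+x-1$, and the observation $\delta(0)=0$ are all correct --- but the argument has a genuine gap exactly where you flag it, and flagging an obstacle is not the same as overcoming it. After substituting $x=\chi(x)u_x^{(3^m+1)/2}$, the per-case equation becomes, upon writing $u_x=v^2$ (legitimate since $d$ is even), the quadratic form $b\,(v^{3^m})^2-\chi(x)\,v^{3^m}v+b^{3^m}v^2=b^{3^m+1}-1$ in the pair $(v,v^{3^m})$; no linearized-polynomial bijectivity argument applies, and ``I expect it to need a separate quadratic-character analysis'' is precisely the missing content, i.e.\ the heart of the proof. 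Moreover, your proposed exclusion of simultaneous solutions in the two mixed cases ``exactly as in Theorem \ref{m}'' cannot go through as stated: in Theorem \ref{m} the two case identities have \emph{constant} right-hand sides, so their sum collapses to $b^{3^m+1}$ and the square/nonsquare contradiction follows, whereas here the right-hand side $x^2+x-1-b^{3^m+1}$ depends on $x$ in every case, and the sum retains the uncancelled terms $x_3(x_3+1)+x_4(x_4+1)$.

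The paper closes this gap with a device absent from your outline: it parametrizes $u_x/b=(\xi+1)^2/\xi$ with $\xi\in\gf(3^{2n})\setminus\{0,\pm1\}$, so that $u_{x+1}/b=-(\xi-1)^2/\xi$ automatically, shows that $\xi$ must lie in the norm-one subgroup $\xi^{3^n+1}=1$, and then uses $u_x^{(3^m+1)/2}=\chi(x)x$ to turn each case into a quadratic in $t=\xi^{(3^m\pm1)/2}$. The solution count is then controlled by $\gcd\bigl(\tfrac{3^m\pm1}{2},3^n+1\bigr)$ (splitting on the parity of $m$), with a final square/nonsquare contradiction of the shape $(u_{x_1}+u_{x_2}+b)^2=-u_{x_1}u_{x_2}$ eliminating the excess $\xi$'s in the case $\chi(x+1)=-\chi(x)$. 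Some such linearizing substitution is needed; as written, your argument does not reach the stated bound.
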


\begin{proof}Note that $d$ is an even number and $\gcd(d,3^n-1)=2$. For $b\in\gf(3^n)$, we consider equation
\begin{equation}\label{zha+main}\Delta(x)=(x+1)^d+x^d=b.
\end{equation}
It is easy to see that $\Delta(0)=\Delta(-1)=1$, and (\ref{zha+main}) has no solution when $b=0$. Let $x\in\gf(3^n)\setminus\{0,-1\}$ be a solution of (\ref{zha+main}) for some given $b\in\gf(3^n)^*$. Denote by $u_{x+1}=(x+1)^d$ and $u_x=x^d$. Since $\frac{3^m+1}{2}\cdot d=\frac{3^{n+1}-1}{2}\equiv 1+\frac{3^n-1}{2} (\mathrm{mod}~3^n-1)$, we have ${u_x}^{\frac{3^m+1}{2}}=\chi(x)x$ and ${u_{x+1}}^{\frac{3^m+1}{2}}=\chi(x+1)(x+1)$. One can easily see that if $u_x$ and $\chi(x)$ are given, $x$ can be determined uniquely.

Let $\xi\in\gf(3^{2n})\setminus\{0,\pm1\}$ such that $\frac{u_x}{b}=\xi+\frac{1}{\xi}-1=\frac{(\xi+1)^2}{\xi}$, then we have $\frac{u_{x+1}}{b}=-\xi-\frac{1}{\xi}-1=-\frac{(\xi-1)^2}{\xi}$ by (\ref{zha+main}). Moreover, we can obtain $\chi(x)x={u_x}^{\frac{3^m+1}{2}}=(\frac{b(\xi+1)^2}{\xi})^{\frac{3^m+1}{2}}$ and $\frac{b(\xi+1)^2}{\xi}=x^d=(\frac{b(\xi+1)^2}{\xi})^{\frac{3^m+1}{2}\cdot d}$.
Similarly, $-\frac{b(\xi-1)^2}{\xi}=(x+1)^d=(-\frac{b(\xi-1)^2}{\xi})^{\frac{3^m+1}{2}\cdot d}$. Then $\xi$ satisfies $-(\frac{\xi+1}{\xi-1})^2=(\frac{\xi+1}{\xi-1})^{(3^m+1)d}$, i.e., $(\frac{\xi+1}{\xi-1})^{3(3^n-1)}=-1$. This with $\xi\in\gf(3^{2n})$ leads to $\xi^{3^n+1}=1$. In the following, we discuss equation (\ref{zha+main}) in two cases.

Case 1. $\chi(x+1)=\chi(x)$.

In this case, ${u_{x+1}}^{\frac{3^m+1}{2}}-{u_x}^{\frac{3^m+1}{2}}=\chi(x+1)(x+1)-\chi(x)x=\chi(x)$. That is,
\[(-\frac{b(\xi-1)^2}{\xi})^{\frac{3^m+1}{2}}-(\frac{b(\xi+1)^2}{\xi})^{\frac{3^m+1}{2}}=\chi(x).\]
We deduce the following equation
\begin{equation}\label{zha+case1}(-1)^{\frac{3^m+1}{2}}(\xi-1)^{3^m+1}-(\xi+1)^{3^m+1}=\chi(x)b^{-\frac{3^m+1}{2}}\xi^{\frac{3^m+1}{2}}.
\end{equation}
Two subcases are considered as follows.

Subcase 1.1. $\frac{3^m+1}{2}$ is even, i.e., $m$ is odd. Then (\ref{zha+case1}) becomes $\xi^{3^m}+\xi=\chi(x)b^{-\frac{3^m+1}{2}}\xi^{\frac{3^m+1}{2}}$. Let $t=\xi^{\frac{3^m-1}{2}}$, then $t_{1,2}=-\chi(x)b^{-\frac{3^m+1}{2}}\pm\sqrt{b^{-(3^m+1)}-1}$.
Since $m$ is odd, then $\gcd(m,2n)=1$ and $\gcd(\frac{3^m-1}{2},3^{2n}-1)=1$. We can obtain a unique $\xi_1$ from $\xi^{\frac{3^m-1}{2}}=t_1$ since $\gcd(\frac{3^m-1}{2},3^{2n}-1)=1$. For $t_2=t^{-1}_1$, we can also obtain a unique $\xi_2$ such that $\xi_2^{\frac{3^m-1}{2}}=t_2$. Note that $\xi_2=\xi^{-1}_1$ and they give the same $u_x$.


Subcase 1.2. $\frac{3^m+1}{2}$ is odd, i.e., $m$ is even. Then (\ref{zha+case1}) becomes $\xi^{3^m+1}+1=\chi(x)b^{-\frac{3^m+1}{2}}\xi^{\frac{3^m+1}{2}}$. Let $t=\xi^{\frac{3^m+1}{2}}$, then $t_{1,2}=-\chi(x)b^{-\frac{3^m+1}{2}}\pm\sqrt{b^{-(3^m+1)}-1}$. Since $m$ is even, then $\gcd(m,2n)=2$ and $\gcd(\frac{3^m+1}{2},3^{2n}-1)=1$. We can obtain a unique $\xi_1$ from $\xi^{\frac{3^m+1}{2}}=t_1$ since $\gcd(\frac{3^m+1}{2},3^{2n}-1)=1$. For $t_2=t^{-1}_1$, we can also obtain a unique $\xi_2$ such that $\xi_2^{\frac{3^m+1}{2}}=t_2$. Note that $\xi_2=\xi^{-1}_1$ and they give the same $u_x$.

By discussions in the above two subcases, we conclude that one can obtain a unique $u_x$ from given $b$ and $\chi(x)$, and then we find at most one solution of (\ref{zha+case1}) for each $\chi(x)$. This case has at most $2$ solutions.

Case 2. $\chi(x+1)=-\chi(x)$.

In this case, ${u_{x+1}}^{\frac{3^m+1}{2}}+{u_x}^{\frac{3^m+1}{2}}=\chi(x+1)(x+1)+\chi(x)x=-\chi(x)$. That is,
\[(-\frac{b(\xi-1)^2}{\xi})^{\frac{3^m+1}{2}}+(\frac{b(\xi+1)^2}{\xi})^{\frac{3^m+1}{2}}=-\chi(x).\]
We deduce the following equation
\begin{equation}\label{zha+case2}(-1)^{\frac{3^m+1}{2}}(\xi-1)^{3^m+1}+(\xi+1)^{3^m+1}=-\chi(x)b^{-\frac{3^m+1}{2}}\xi^{\frac{3^m+1}{2}}.
\end{equation}
We have following two subcases.

Subcase 2.1. $\frac{3^m+1}{2}$ is even. Then (\ref{zha+case2}) becomes
\[\xi^{3^m+1}+1=\chi(x)b^{-\frac{3^m+1}{2}}\xi^{\frac{3^m+1}{2}}.\]
Let $t=\xi^{\frac{3^m+1}{2}}$, if $\chi(x)=1$, then $t_{1,2}=-b^{-\frac{3^m+1}{2}}\pm\sqrt{b^{-(3^m+1)}-1}$. Note that $t_2=t^{-1}_1$ and they give the same $u_x$'s, we only consider $t_1$. Since $\frac{3^m+1}{2}$ is even, $m$ is odd, then $\gcd(\frac{3^m+1}{2},3^{n}+1)=2$. We can obtain two solutions, namely $\pm\xi_1$, from $\xi^{\frac{3^m+1}{2}}=t_1$ since $\gcd(\frac{3^m+1}{2},3^{n}+1)=2$. If $\chi(x)=-1$, then $t_{3,4}=b^{-\frac{3^m+1}{2}}\pm\sqrt{b^{-(3^m+1)}-1}$. We only consider $t_4$, which satisfies $t_4=-t_1$. Similarly, we obtain another two $\xi$'s, namely $\delta\xi_1$, $-\delta\xi_1$, where $\delta\in\gf(3^{2n})$ with $\delta^2=-1$. In this subcase, we get four distinct $\xi$'s and each of them corresponds a possible solution of (\ref{zha+main}).

Subcase 2.2. $\frac{3^m+1}{2}$ is odd. Then (\ref{zha+case2}) becomes
\[\xi^{3^m}+\xi=\chi(x)b^{-\frac{3^m+1}{2}}\xi^{\frac{3^m+1}{2}}.\]
Let $t=\xi^{\frac{3^m-1}{2}}$, if $\chi(x)=1$, then $t_{1,2}=-b^{-\frac{3^m+1}{2}}\pm\sqrt{b^{-(3^m+1)}-1}$. We only consider the equation $\xi^{\frac{3^m-1}{2}}=t_1$ since the solutions of another equation correspond the same $u_x$'s. Since $\frac{3^m+1}{2}$ is odd, $m$ is even, and $\gcd(\frac{3^m-1}{2},3^{n}+1)=4$. We obtain four solutions from $\xi^{\frac{3^m-1}{2}}=t_1$, namely $\xi_2,\delta\xi_2,-\xi_2,-\delta\xi_2$, where $\delta\in\gf(3^{2n})$ with $\delta^2=-1$. If $\chi(x)=-1$, then $t_{3,4}=b^{-\frac{3^m+1}{2}}\pm\sqrt{b^{-(3^m+1)}-1}$. We only consider $t_4$, which satisfies $t_4=-t_1$. If $\xi'_2$ is a solution of $\xi^{\frac{3^m-1}{2}}=t_4=-t_1$, then $(\frac{\xi'_2}{\xi_2})^{\frac{3^m-1}{2}}=-1$. We obtain that $(\frac{\xi'_2}{\xi_2})^4=1$ from $\gcd(3^m-1,3^{n}+1)=4$, i.e., $\xi'_2=\delta^i\xi_2$, $0\leq i \leq 3$. That means $\xi^{\frac{3^m-1}{2}}=t_4$ cannot contribute new $\xi$'s. We also obtain four distinct $\xi$'s in this subcase.

Recall that $u_{x}=b(\xi+\frac{1}{\xi}-1)$, in the following we prove that $\xi$ and $\delta\xi$ cannot contribute solutions of (\ref{zha+main}) simultaneously, where $\delta$ we defined before. More precisely, let
$u_{x_1}=b(\xi+\frac{1}{\xi}-1)$ and $u_{x_2}=b(\delta\xi+\frac{1}{\delta\xi}-1)$, then we have
\[(u_{x_1}+b)^2+(u_{x_2}+b)^2=b^2((\xi+\frac{1}{\xi})^2+(\delta\xi+\frac{1}{\delta\xi})^2)=b^2.\]
The above identity can be rewritten as
\[(u_{x_1}+u_{x_2}+b)^2=-u_{x_1}u_{x_2},\]
which is a contradiction. That means each of subcases 2.1 and 2.2 has at most two solutions.

By discussions as above, we conclude that $_{-1}\Delta_F\leq 4$. The proof is finished.
\end{proof}

\section{concluding remarks}
In this paper, we studied the $-1$-differential uniformity of ternary APN power functions. We obtain many classes of power functions with low $-1$-differential uniformity, and some of them are almost perfect $-1$-nonlinear. It is mentioned that in this paper we give the upper bound of the $-1$-differential uniformity of some power functions, it is better to study whether the equality holds. In this paper, we only studied $c=-1$, it is also good to study the $c$-differential properties for $\pm1\neq c\in \gf(3^n)$. Our future work is to find more power functions with low $c$-differential uniformity. This topic is widely open. Power functions with low usual differential uniformity are useful in sequences, coding theory, and combinatorial designs. It is worth finding the applications of power functions with low $c$-differential uniformity in such areas.
\section{Acknowledgments}
H. Yan's research was supported by the National Natural Science Foundation of
China Grant (No.11801468) and the Guangxi Key Laboratory of Cryptography and Information Security Grant (No. GCIS201814).

\end{document}